\theoremstyle{definition}
\newtheorem{definition}{Definition}
\newtheorem{theorem}{Theorem}
\newtheorem{corollary}{Corollary}
\newtheorem*{lemma}{Lemma}
\newcommand{\Tr}{{\rm Tr}} 
\newcommand{\llvert}{\left|\hspace{-0.5mm}\left|}
\newcommand{\rrvert}{\right|\hspace{-0.5mm}\right|}
\begin{document}
\title{Noise effects on purity and quantum entanglement in terms of physical implementability}
\author{Yuchen Guo}
\affiliation{State Key Laboratory of Low Dimensional Quantum Physics and Department of Physics, Tsinghua University, Beijing 100084, China}
\author{Shuo Yang}
\email{shuoyang@tsinghua.edu.cn}
\affiliation{State Key Laboratory of Low Dimensional Quantum Physics and Department of Physics, Tsinghua University, Beijing 100084, China}
\affiliation{Frontier Science Center for Quantum Information, Beijing 100084, China}
\affiliation{Hefei National Laboratory, Hefei 230088, China}
\begin{abstract}
    Quantum decoherence due to imperfect manipulation of quantum devices is a key issue in the noisy intermediate-scale quantum (NISQ) era.
    Standard analyses in quantum information and quantum computation use error rates to parameterize quantum noise channels.
    However, there is no explicit relation between the decoherence effect induced by a noise channel and its error rate.
    In this work, we propose to characterize the decoherence effect of a noise channel by the physical implementability of its inverse, which is a universal parameter quantifying the difficulty to simulate the noise inverse with accessible quantum channels.
    We establish two concise inequalities connecting the decrease of the state purity and logarithmic negativity after a noise channel to the physical implementability of the noise inverse, which is required to be decomposed as mutually orthogonal unitaries or product channels respectively.
    Our results are numerically demonstrated on several commonly adopted two-qubit noise models.
    We believe that these relations contribute to the theoretical research on the entanglement properties of noise channels and provide guiding principles for quantum circuit design.
\end{abstract}
\maketitle
\section*{Introduction}
Quantum entanglement is an important resource in quantum computers \cite{Horodecki2009, Amico2008}, empowering the establishment of quantum supremacy \cite{Preskill2012, Arute2019}.
The characterization and detection of quantum entanglement \cite{Guhne2009, Wang2022, Liu2022} in physical systems have been the primary concerns of quantum information and computation for decades.
On the other hand, imperfect control of quantum systems in the noisy intermediate-scale quantum (NISQ) era may induce errors \cite{Preskill2018} into quantum circuits composed of unitary gates, which can be described by general quantum channels.

An interesting problem arises: how does the quantum entanglement vary after the implementation of a quantum channel?
Roughly speaking, entangled unitary gates can generate entanglement, while noise channels destroy entanglement.
The balance between these two parts gives the critical point of whether highly-entangled states can be generated in quantum computers, which is a necessary condition to achieve universal quantum computation \cite{Li2019, Zhou2020, Zhang2022}.
Previous studies on entanglement measures of quantum channels \cite{Gour2021, Ballarin2022} give a good response to the former issue, while for the latter, standard analyses on noise often use the error rates to describe the strength of noise effects \cite{Nielsen2009}.
Nevertheless, for different noise models, there is no universal relation between their error rates and decoherence effects.

Here, we try to characterize how much purity and quantum entanglement a noise channel $\mathcal{E}$ destroys with a universal parameter.
We take its inverse $\mathcal{E}^{-1}$ into consideration, which generally is not a physical quantum channel but can recover the quantum entanglement destroyed by $\mathcal{E}$ mathematically since $\mathcal{E}^{-1}\circ\mathcal{E}=\mathcal{I}$.
Intuitively, the harder to implement the noise inverse, the more destructive the noise itself.
Therefore, we believe that for an invertible noise channel $\mathcal{E}$, the physical implementability of its inverse $\mathcal{E}^{-1}$ \cite{Jiang2021}, which represents the sampling cost to implement a linear map \cite{Jiang2021, Takagi2021, Regula2021}, is a prime candidate.
Such a sampling cost measure is constructed from quantum resource theories \cite{Chitambar2019, Vidal1999, Takagi2019}, characterizing the distance between a non-physical linear map and the set of physical quantum channels.
In particular, we establish two concise and universal inequalities bounding the decrease of the state purity and logarithmic negativity \cite{Vidal2002A, Plenio2005, Wang2020B} under noise channels with this measure.
The first inequality is verified by several analytical examples, while the second is numerically demonstrated on four typical two-qubit noise channels.

\section*{Results}
We adopt the Choi operator $\Lambda_{\mathcal{N}}$\cite{Choi1975,Nielsen2009} to represent a quantum linear map $\mathcal{N}$, from which one can construct the output density operator \cite{Supplemental}
\begin{equation}
    \mathcal{N}\left(\rho\right) = \Tr_{\sigma}{\left[\left(\rho^{{\rm T}}\otimes I_{\tau}\right)\Lambda_{\mathcal{N}}\right]}.
\end{equation}
For an invertible completely positive (CP) and trace-preserving (TP) map $\mathcal{T}$, its inverse is Hermitian-preserving (HP) and TP \cite{Cao2021, Jiang2021}.
The sampling cost for implementing an HPTP map $\mathcal{N}$ with the Monte Carlo method is characterized by its physical implementability \cite{Jiang2021, Regula2021}, defined as
\begin{equation}
    \nu\left(\mathcal{N}\right) := \log_2{\min_{\mathcal{T}_i \text{ is CPTP}}{\left\{\left.\sum_{i}|q_i|\right|\mathcal{N} = \sum_i q_i \mathcal{T}_i, q_i\in \mathbb{R}\right\}}}.\label{equ: Definition}
\end{equation}
From the perspective of quantum resource theories, such a quantity measures the distance of an HPTP map $\mathcal{N}$ from the set of CPTP channels $\{\mathcal{T}_i\}$.
Inspired by the fact that the inverse of a quantum channel is still CPTP iff it is unitary \cite{Jiang2021}, we expect that the physical implementability of $\mathcal{E}^{-1}$ can characterize the deviation of the noise channel $\mathcal{E}$ from unitary maps, which preserve purity and are capable of increasing quantum entanglement.

We begin by connecting the decrease of purity to the physical implementability of the noise inverse.
The purity of a quantum state $\rho$ is defined as $\mathcal{P}\left(\rho\right) := \Tr{\left[\rho^2\right]}$ \cite{Nielsen2009}.
We decompose the density operator by a set of Hermitian, complete, and orthonormal basis $\{O_{\alpha}\}$ containing the identity $O_0 = I$ as
\begin{equation}
    \rho = \frac{I + \sum_{i=1}^{d^2-1}r_iO_{i}}{d},
\end{equation}
where $r_{i}$ is defined by the expectation value of $\{O_{i}\}$, i.e., $r_{i} = \Tr{\left[\rho O_{i}\right]}$.
Then the purity is related to the length of the vector $\vec{r}$ by
\begin{equation}
    \mathcal{P}\left(\rho\right) := \Tr{\left[\rho^2\right]} = \Tr{\left[\rho\left(\frac{I+\sum_{i=1}^{d^2-1} r_{i}O_{i}}{d}\right)\right]} = \frac{1+\left|\vec{r}\right|^2}{d},\label{equ: Purity}
\end{equation}
which enables us to calculate the purity change via the transformation of $\vec{r}$.

It was proved in \cite{Jiang2021} that if an HPTP map $\mathcal{N}$ on a $d$-dimensional Hilbert space is the superposition of mutually orthogonal unitaries, such a decomposition is optimal (in the sense of the minimization in Eq. \eqref{equ: Definition}), thus the physical implementability of $\mathcal{N}$ is determined by the trace norm of its Choi operator, i.e., 
\begin{equation}
    2^{\nu\left(\mathcal{N}\right)} = \sum_{i}{\left|q_i\right|} = \frac{\llvert\Lambda_{\mathcal{N}}\rrvert_1}{d},\label{equ: Phys_orthogonal}
\end{equation}
from which we derive the following theorem.
\begin{theorem}
    For a mixed unitary map $\mathcal{N}$ decomposed by a set of mutually orthogonal unitaries, the purity of the input state $\rho_0$ and the output state $\rho$ satisfy
    \begin{equation}
        \log_2 {\left(\frac{\mathcal{P}\left(\rho\right)d - 1}{\mathcal{P}\left(\rho_0\right)d - 1}\right)} \leq 2\nu\left(\mathcal{N}\right),
    \end{equation}
    where $d$ is the dimension of the Hilbert space.\label{the: Purity_Physical}
\end{theorem}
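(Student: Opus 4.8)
The plan is to recast the purity change as the change in the generalized Bloch vector and then to bound the operator norm of the linear map that this vector undergoes. By Eq.~\eqref{equ: Purity}, $\mathcal{P}(\rho)d-1=|\vec r|^{2}$ and $\mathcal{P}(\rho_0)d-1=|\vec r_0|^{2}$, where $\vec r_0$ and $\vec r$ collect the coefficients of $\rho_0$ and $\rho=\mathcal{N}(\rho_0)$ along $\{O_i\}_{i\ge1}$ (in the normalization $\Tr[O_\alpha O_\beta]=d\,\delta_{\alpha\beta}$ under which Eq.~\eqref{equ: Purity} holds). Hence it suffices to show $|\vec r|^{2}\le 2^{2\nu(\mathcal{N})}|\vec r_0|^{2}$; dividing and taking $\log_2$ then yields the theorem, provided $\rho_0$ is not maximally mixed so that $|\vec r_0|\neq0$ and the ratio is defined.

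First I would fix the optimal decomposition supplied by the discussion around Eq.~\eqref{equ: Phys_orthogonal}: write $\mathcal{N}=\sum_k q_k\,\mathcal{U}_k$ with $\mathcal{U}_k(\cdot)=U_k(\cdot)U_k^{\dagger}$, the $U_k$ mutually orthogonal, $q_k\in\mathbb R$, and $\sum_k|q_k|=2^{\nu(\mathcal{N})}$. Trace preservation forces $\sum_k q_k=1$, so $\mathcal{N}(I)=I$; inserting $\rho_0=(I+\sum_j r_{0,j}O_j)/d$ therefore leaves the identity component untouched and gives $\vec r=M\vec r_0$ with $M=\sum_k q_k M^{(k)}$ and $M^{(k)}_{ij}=\tfrac1d\Tr[U_kO_jU_k^{\dagger}O_i]$. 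The point to record here is that no affine shift appears, precisely because each unitary channel is unital.

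The crux is to observe that every $M^{(k)}$ is a rotation. Conjugation by $U_k$ preserves the Hilbert--Schmidt inner product and fixes $I$, hence it is an isometry of the space of traceless Hermitian operators; expressed in the uniformly normalized basis $\{O_i\}_{i\ge1}$ this means $M^{(k)}$ is a real orthogonal matrix, so $\|M^{(k)}\|_{\mathrm{op}}=1$. The triangle inequality for the operator norm then gives $\|M\|_{\mathrm{op}}\le\sum_k|q_k|\,\|M^{(k)}\|_{\mathrm{op}}=\sum_k|q_k|=2^{\nu(\mathcal{N})}$, whence $|\vec r|^{2}=|M\vec r_0|^{2}\le 2^{2\nu(\mathcal{N})}|\vec r_0|^{2}$, which combined with the first paragraph completes the proof. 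Note that mutual orthogonality of the $U_k$ enters \emph{only} through Eq.~\eqref{equ: Phys_orthogonal}, to identify $\sum_k|q_k|$ with $2^{\nu(\mathcal{N})}$; the estimate $\|M\|_{\mathrm{op}}\le\sum_k|q_k|$ holds for any unitary decomposition.

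The argument is short, so the main obstacle is conceptual rather than computational: one must justify the slogan ``a unitary channel acts as a Bloch-vector rotation'' using \emph{both} unitality (so the induced map is genuinely linear, with no constant term) and Hilbert--Schmidt-norm preservation (so it is an isometry), and must keep straight that the same real matrix $M^{(k)}$ serves simultaneously as the coordinate map on the $r_i$'s and as the object whose orthogonality one verifies. The degenerate case $\rho_0=I/d$, where both sides of the inequality collapse to $\log_2(0/0)$, should be excluded at the outset.
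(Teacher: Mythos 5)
Your proposal is correct and follows essentially the same route as the paper: decompose $\mathcal{N}$ into the mutually orthogonal unitaries, use that each unitary channel is an isometry on the traceless (Bloch-vector) part, apply the triangle inequality to get $|\vec r|\le\sum_k|q_k|\,|\vec r_0|=2^{\nu(\mathcal{N})}|\vec r_0|$ via Eq.~\eqref{equ: Phys_orthogonal}, and translate back to purity through Eq.~\eqref{equ: Purity}. Phrasing the bound through the operator norm of $M=\sum_k q_kM^{(k)}$ rather than directly on the vectors $\vec r(U_k\rho_0U_k^{\dagger})$ is an equivalent packaging, and your explicit attention to unitality and to the degenerate case $\rho_0=I/d$ is a minor tightening, not a different argument.
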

\begin{proof}
    Consider the unitary decomposition $\mathcal{N}\left(\cdot\right) = \sum_i{q_i U_i\left(\cdot\right)U_i^{\dagger}}$, since unitary transformations leave $\left|\vec{r}\right|$ unchanged, we reach
    \begin{equation}
        \left|\vec{r}\left(\rho\right)\right| =\left| \vec{r}\left(\sum_i{q_iU_i\rho_0 U_i^{\dagger}}\right)\right| = \left|\sum_i{q_i \vec{r}\left(U_i\rho_0 U_i^{\dagger}\right)}\right| \leq \sum_i|q_i|\left|\vec{r}\left(U_i\rho_0 U_i^{\dagger}\right)\right| = \sum_i|q_i|\left|\vec{r}\left(\rho_0\right)\right| = 2^{\nu\left(\mathcal{N}\right)}\left|\vec{r}\left(\rho_0\right)\right|,
    \end{equation}
    where we have used Eq. \eqref{equ: Phys_orthogonal}.
    Finally, we build the relationship between the physical implementability of $\mathcal{N}$ and the ratio of purity, i.e., 
    \begin{equation}
        \log_2 {\left(\frac{\mathcal{P}\left(\rho\right)d - 1}{\mathcal{P}\left(\rho_0\right)d - 1}\right)} = \log_2{\frac{\left|\vec{r}\left(\rho\right)\right|^2}{\left|\vec{r}\left(\rho_0\right)\right|^2}}\leq 2\nu\left(\mathcal{N}\right). \qedhere
    \end{equation}
\end{proof}

From Theorem \ref{the: Purity_Physical} we can directly derive the following corollary, which is one of the main results of this work.
\begin{corollary}
    For a noise channel $\mathcal{E}$, if both $\mathcal{E}$ and $\mathcal{E}^{-1}$ are mixed unitary maps decomposed by mutually orthogonal unitaries, the purity of the input state $\rho_0$ and the output state $\rho$ satisfy
    \begin{framed}
    \begin{equation}
            -2 \nu{\left(\mathcal{E}^{-1}\right)}\leq \log_2 {\left(\frac{\mathcal{P}\left(\rho\right)d - 1}{\mathcal{P}\left(\rho_0\right)d - 1}\right)} \leq 2\nu\left(\mathcal{E}\right) = 0,\label{equ: Purity_noise}
        \end{equation}
    \end{framed}
    \noindent where $d$ is the dimension of the Hilbert space.
    \label{coro: purity}
\end{corollary}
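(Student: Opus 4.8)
The plan is to obtain Corollary~\ref{coro: purity} as a two-sided application of Theorem~\ref{the: Purity_Physical}: run it once on the noise channel and once on its inverse, using the identity $\mathcal{E}^{-1}\circ\mathcal{E}=\mathcal{I}$ so that the state pair $(\rho_0,\rho)$ with $\rho=\mathcal{E}(\rho_0)$ may equally be read backwards as $\rho_0=\mathcal{E}^{-1}(\rho)$.

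For the upper chain of Eq.~\eqref{equ: Purity_noise}, I would first note that a noise channel $\mathcal{E}$ is by definition CPTP and hence already physical, so the trivial decomposition $\mathcal{E}=1\cdot\mathcal{E}$ is admissible in Eq.~\eqref{equ: Definition}; since trace preservation forces $\sum_i q_i=1$ in any decomposition, no admissible choice beats $\sum_i|q_i|=1$, whence $\nu(\mathcal{E})=0$. (Equivalently, under the orthogonal-unitary hypothesis on $\mathcal{E}$ its Choi operator is positive with trace $d$, so Eq.~\eqref{equ: Phys_orthogonal} gives $2^{\nu(\mathcal{E})}=\|\Lambda_{\mathcal{E}}\|_1/d=1$.) Applying Theorem~\ref{the: Purity_Physical} with $\mathcal{N}=\mathcal{E}$ and the pair $\rho=\mathcal{E}(\rho_0)$ then yields $\log_2\!\frac{\mathcal{P}(\rho)d-1}{\mathcal{P}(\rho_0)d-1}\le 2\nu(\mathcal{E})=0$.

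For the lower bound I would apply the very same theorem to $\mathcal{N}=\mathcal{E}^{-1}$. Because $\mathcal{E}$ is an invertible CPTP map its inverse is HPTP, and by hypothesis $\mathcal{E}^{-1}$ decomposes into mutually orthogonal unitaries, so Theorem~\ref{the: Purity_Physical} applies to it with input state $\rho$ and output state $\mathcal{E}^{-1}(\rho)=\rho_0$, giving $\log_2\!\frac{\mathcal{P}(\rho_0)d-1}{\mathcal{P}(\rho)d-1}\le 2\nu(\mathcal{E}^{-1})$. Negating both sides and using $-\log_2(a/b)=\log_2(b/a)$ turns this into $-2\nu(\mathcal{E}^{-1})\le\log_2\!\frac{\mathcal{P}(\rho)d-1}{\mathcal{P}(\rho_0)d-1}$, and concatenating the two estimates produces Eq.~\eqref{equ: Purity_noise}.

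I do not expect a genuine obstacle: the content is bookkeeping plus verifying that the hypotheses of Theorem~\ref{the: Purity_Physical} (and of Eq.~\eqref{equ: Phys_orthogonal}) hold for both maps. The points worth a line of care are that $\mathcal{E}^{-1}$ is indeed HPTP and orthogonally-unitary-decomposable so that $\nu(\mathcal{E}^{-1})$ is governed by Eq.~\eqref{equ: Phys_orthogonal}; that $\rho_0$ and $\rho$ are legitimate density operators, so $\mathcal{P}(\rho)d-1=|\vec{r}(\rho)|^2$ and $\mathcal{P}(\rho_0)d-1=|\vec{r}(\rho_0)|^2$ are nonnegative and the logarithms make sense (the statement implicitly excludes the degenerate case $\rho_0=I/d$, where the ratio is $0/0$); and that $\nu(\mathcal{E}^{-1})\ge 0$ always, so the left-hand inequality is a real constraint. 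One may also remark that the lower bound is tight exactly when the triangle inequality on the $\vec{r}$-vectors inside the proof of Theorem~\ref{the: Purity_Physical} is saturated for $\mathcal{E}^{-1}$.
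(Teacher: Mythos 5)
Your proposal is correct and follows the same route the paper intends: the corollary is obtained by applying Theorem~\ref{the: Purity_Physical} twice, once to $\mathcal{E}$ with the pair $(\rho_0,\rho)$ and once to $\mathcal{E}^{-1}$ with the pair read backwards via $\rho_0=\mathcal{E}^{-1}(\rho)$, together with the observation that $\nu(\mathcal{E})=0$ for any CPTP map. Your added remarks on the hypotheses (that $\mathcal{E}^{-1}$ is HPTP and orthogonally-unitary-decomposable, and that the degenerate case $\rho_0=I/d$ is excluded) are consistent with, and slightly more careful than, the paper's one-line derivation.
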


The last equality in Eq. \eqref{equ: Purity_noise} follows from the fact that $\nu{\left(\mathcal{T}\right)} = 0$ for any CPTP map $\mathcal{T}$ by definition.
We note that several commonly used noise models, such as the (multiqubit) Pauli noise, depolarizing noise, and dephasing noise, all belong to this category (see Methods).
It can be easily verified that both sides of this inequality can be reached for the single-qubit Pauli noise. 
As for a more generic multiqubit noise, such as the $n$-qubit dephasing noise, we discuss the bounds in Supplementary Information \cite{Supplemental}, where we conclude that the equality may hold when $n = 1$, while the lower bounds can be further tightened for $n\geq 2$.

Now we turn to consider the noise effects on quantum entanglement, which limit the potential power of quantum computers \cite{Zhou2020}.
There are many entanglement measures \cite{Guhne2009} for bipartite mixed states, such as concurrence \cite{Hill1997}, entanglement of formation \cite{Bennett1996,Wootters1998}, entanglement of assistance \cite{Cohen1998}, localizable entanglement \cite{Verstraete2004,Popp2005}, entanglement cost \cite{Vidal2002B,Wang2020A}, etc.
Here we choose the logarithmic negativity \cite{Vidal2002A, Plenio2005} to measure the state entanglement, which characterizes the violation of the well-known positive partial transpose (PPT) criterion \cite{Peres1996}.
For a quantum state $\rho$ on a bipartite system $A\otimes B$, its logarithmic negativity is defined as
\begin{equation}
    E_N\left(\rho\right) := \log_2{\llvert\rho^{{\rm T}_B}\rrvert_1},
\end{equation}
where ${\rm T}_B$ denotes the partial transpose of subsystem $B$.
   
To fully understand this issue, we need to analyze the entanglement property of a quantum channel itself.
Similar to the previous case, we wish to decompose the noise inverse into product quantum channels, inspired by a general property of any entanglement measure $E\left(\rho\right)$, namely that $E\left(\rho\right)$ does not increase under local operations and classical communication (LOCC) \cite{Vedral1997, Vidal2002A}.
The Choi-Jamio\l{}kowski isomorphism \cite{Jamiolkowski1972, Choi1975} between linear maps and density operators motivates us to define separable (entangled) quantum maps \cite{Gour2021}, which is a generalization of separable (entangled) quantum states.
\begin{definition}
    Let $\mathcal{N}$ be an HPTP map on a bipartite system $A\otimes B$.
    We say that $\mathcal{N}$ is separable, if there exist $q_i\in \mathbb{R}$ and product channels $\mathcal{T}_i^A\otimes \mathcal{T}_i^B$ such that
    \begin{equation}
        \mathcal{N}(\cdot) = \sum_i{q_i \left(\mathcal{T}_i^A\otimes \mathcal{T}_i^B\right)\left(\cdot\right)}.\label{equ: generalized}
    \end{equation}
    Otherwise, we call it an entangled map.
    \label{def: Separable}
\end{definition}

We note that not all HPTP maps are separable even if negative coefficients $\{q_i\}$ are allowed in decomposition, which can be proved with the following idea (see Methods for a complete proof).
The Choi operator of any HPTP map can be decomposed with the computational basis
\begin{equation}
    \Lambda_{\mathcal{N}} = \sum_{i=1}^{d}\sum_{j=1}^{d}\ket{i}\hspace{-1mm}\bra{j}^{\sigma}\otimes O_{ij}^{\tau},
\end{equation}
then the TP condition is equivalent to
\begin{equation}
    \Tr{\left[O_{ij}^{\tau}\right]} = \delta_{ij}.
\end{equation}
In this way, if a bipartite HPTP map is separable, we have
\begin{equation}
    \sum_{ijkl}\ket{i}\hspace{-1mm}\bra{j}^{A\sigma}\otimes\ket{k}\hspace{-1mm}\bra{l}^{B\sigma}\otimes O_{ijkl}^{AB\tau} = \sum_m\left(q_m\sum_{ijkl}\ket{i}\hspace{-1mm}\bra{j}^{A\sigma}\otimes\ket{k}\hspace{-1mm}\bra{l}^{B\sigma}\otimes O_{mij}^{A\tau}\otimes O_{mkl}^{B\tau}\right).
\end{equation}
Thus one needs to find the decomposition of $O_{ijkl}^{AB\tau}$ as $O_{ijkl}^{AB\tau} = \sum_m{q_mO_{mij}^{A\tau}\otimes O_{mkl}^{B\tau}}$, which gives $\Tr_{A}\left[O_{ijkl}^{AB\tau}\right]=0$ for $i\neq j$, a much stronger condition than what we have for the original map $\Tr{\left[O_{ijkl}^{AB\tau}\right]}=0$.

To study the influence of noise on the state negativity, we try to connect the partial transpose of output and input states.
The following theorem provides a general bound for the output state negativity concerning the optimal decomposition given in Eq. \eqref{equ: generalized}.
\begin{theorem}
    For a separable HPTP map $\mathcal{N}$ on a bipartite system $A\otimes B$, the logarithmic negativity of the input state $\rho_0$ and the output state $\rho$ satisfy
    \begin{equation}
        E_N\left(\rho\right) - E_N\left(\rho_0\right) \leq \eta\left(\mathcal{N}\right)\equiv\log_2{\min{\left\{\left.\sum_{i}|q_i|\right|\mathcal{N} = \sum_i q_i \mathcal{T}_i^A\otimes\mathcal{T}_i^B, q_i\in \mathbb{R}\right\}}},
    \end{equation}
    where $\mathcal{T}_i^{A(B)}$ are quantum channels on subsystem $A(B)$.\label{the: negativity_Physical}
\end{theorem}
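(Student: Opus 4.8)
The plan is to transport the partial transpose through the assumed separable decomposition and then apply the triangle inequality for the trace norm. Fix a decomposition $\mathcal{N}=\sum_i q_i\,\mathcal{T}_i^A\otimes\mathcal{T}_i^B$ into product channels that attains (or approaches) the minimum defining $\eta(\mathcal{N})$, and write the output state as $\rho=\mathcal{N}(\rho_0)=\sum_i q_i\,\sigma_i$ with $\sigma_i:=(\mathcal{T}_i^A\otimes\mathcal{T}_i^B)(\rho_0)$. Since the partial transpose $\mathrm{T}_B$ is linear, $\rho^{\mathrm{T}_B}=\sum_i q_i\,\sigma_i^{\mathrm{T}_B}$, and by the triangle inequality and homogeneity of $\llvert\cdot\rrvert_1$ one gets $\llvert\rho^{\mathrm{T}_B}\rrvert_1\le\sum_i|q_i|\,\llvert\sigma_i^{\mathrm{T}_B}\rrvert_1$.

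The second step is to bound each $\llvert\sigma_i^{\mathrm{T}_B}\rrvert_1$ by $\llvert\rho_0^{\mathrm{T}_B}\rrvert_1$. Because $\mathcal{T}_i^A$ and $\mathcal{T}_i^B$ are genuine quantum channels (CPTP), $\sigma_i$ is a genuine bipartite state, hence $\llvert\sigma_i^{\mathrm{T}_B}\rrvert_1=2^{E_N(\sigma_i)}$. Moreover $\sigma_i$ is obtained from $\rho_0$ by the \emph{local} channel $\mathcal{T}_i^A\otimes\mathcal{T}_i^B$, so by the monotonicity of any entanglement measure under LOCC (here, under local operations), recalled just before Definition~\ref{def: Separable}, we have $E_N(\sigma_i)\le E_N(\rho_0)$, i.e. $\llvert\sigma_i^{\mathrm{T}_B}\rrvert_1\le\llvert\rho_0^{\mathrm{T}_B}\rrvert_1=2^{E_N(\rho_0)}$. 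Combining with the first step, $\llvert\rho^{\mathrm{T}_B}\rrvert_1\le\big(\sum_i|q_i|\big)\,2^{E_N(\rho_0)}$; taking $\log_2$ gives $E_N(\rho)\le E_N(\rho_0)+\log_2\sum_i|q_i|$, and minimizing over admissible product-channel decompositions yields $E_N(\rho)-E_N(\rho_0)\le\eta(\mathcal{N})$.

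The step I expect to be the real content is the justification that $E_N$ does not increase under a local channel — equivalently, that $\llvert\cdot^{\mathrm{T}_B}\rrvert_1$ is contractive under $\mathcal{T}^A\otimes\mathcal{T}^B$. A naive hands-on route would write $[(\mathcal{T}^A\otimes\mathcal{T}^B)(X)]^{\mathrm{T}_B}=(\mathcal{T}^A\otimes\widetilde{\mathcal{T}^B})(X^{\mathrm{T}_B})$ with $\widetilde{\mathcal{T}^B}=\mathrm{T}\circ\mathcal{T}^B\circ\mathrm{T}$ and try to use trace-norm contractivity of trace-preserving positive maps; this stalls because $\widetilde{\mathcal{T}^B}$ (hence $\mathcal{T}^A\otimes\widetilde{\mathcal{T}^B}$) need not be completely positive, so the contraction argument does not apply to a generic Hermitian input. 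The clean way around this, adopted above, is to apply the standard LOCC-monotonicity of the logarithmic negativity to the \emph{honest states} $\sigma_i$, so that one never has to argue contractivity of a non-CP map on an arbitrary Hermitian operator. A minor bookkeeping point — whether the infimum in $\eta(\mathcal{N})$ is attained and whether finitely many terms suffice — is harmless, since the displayed inequality holds for every admissible decomposition and the infimum is taken afterwards.
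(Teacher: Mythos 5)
Your proposal is correct and follows essentially the same route as the paper's proof: linearity of the partial transpose, the triangle inequality for the trace norm, and the bound $\llvert[(\mathcal{T}_i^A\otimes\mathcal{T}_i^B)(\rho_0)]^{{\rm T}_B}\rrvert_1\leq\llvert\rho_0^{{\rm T}_B}\rrvert_1$, which the paper likewise justifies by the monotonicity of the (logarithmic) negativity under local operations. Your additional remarks on why a direct contractivity argument for the non-CP map $\mathrm{T}\circ\mathcal{T}^B\circ\mathrm{T}$ would stall are a sensible elaboration of the same key step, not a different approach.
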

\begin{proof}
    For any decomposition $\mathcal{N} = \sum_i q_i \mathcal{T}_i^A\otimes\mathcal{T}_i^B$, we can bound the trace norm of $\rho^{{\rm T}_B}$ as
    \begin{equation}
        \llvert \rho^{{\rm T}_B} \rrvert_1 = \llvert \sum_i{q_i}{\left[\mathcal{T}_i^A\otimes \mathcal{T}_i^B\left(\rho_0\right)\right]^{{\rm T}_B}}\rrvert_1 \leq \sum_{i}{|q_i|\llvert\left[\mathcal{T}_i^A\otimes \mathcal{T}_i^B\left(\rho_0\right)\right]^{{\rm T}_B}\rrvert_1} \leq \sum_i{|q_i|}\llvert\rho_0^{{\rm T}_B}\rrvert_1,\label{equ: Proof_negativity}\qedhere
    \end{equation}
\end{proof}

The above theorem indicates that $\eta\left(\mathcal{N}\right)$ characterizes the potential of a separable HPTP map $\mathcal{N}$ to increase entanglement, satisfying $\eta\left(\mathcal{N}\right)\geq\nu\left(\mathcal{N}\right)$ by definitions.
Here we focus on a special class of separable HPTP maps with $\eta\left(\mathcal{N}\right)=\nu\left(\mathcal{N}\right)$, i.e., the decomposition by product channels is still optimal in the sense of Eq. \eqref{equ: Definition}.
For example, a noise channel $\mathcal{E}$ generally satisfies that $\eta\left(\mathcal{E}\right) = \nu\left(\mathcal{E}\right)=0$ and cannot increase entanglement.
On the other hand, we believe that the inverse of the noise channel can counteract the noise effect and recover coherence.
It helps us characterize how much entanglement the noise channel destroys.
With the above theorem, we can derive another important conclusion for our study.

\begin{corollary}
    For a noise channel $\mathcal{E}$, if both $\mathcal{E}$ and $\mathcal{E}^{-1}$ are separable with $\eta\left(\mathcal{E}^{-1}\right) = \nu\left(\mathcal{E}^{-1}\right)$, the logarithmic negativity of the input state $\rho_0$ and the output state $\rho$ satisfy
    \begin{framed}
    \begin{equation}
            -\nu{\left(\mathcal{E}^{-1}\right)} \leq \Delta E_N \equiv E_N\left(\rho\right) - E_N\left(\rho_0\right) \leq \nu\left(\mathcal{E}\right) = 0.\label{equ: Negativity_noise}
        \end{equation}
    \end{framed}
    \label{coro: negativity}
\end{corollary}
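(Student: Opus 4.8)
The plan is to obtain both inequalities in Eq.~\eqref{equ: Negativity_noise} by invoking Theorem~\ref{the: negativity_Physical} twice, once for the noise channel $\mathcal{E}$ and once for its inverse $\mathcal{E}^{-1}$, and then feeding in the two additional hypotheses. Throughout, write $\rho = \mathcal{E}(\rho_0)$; since $\mathcal{E}$ is CPTP, $\rho$ is a genuine density operator, and since $\mathcal{E}^{-1}\circ\mathcal{E} = \mathcal{I}$ we also have $\rho_0 = \mathcal{E}^{-1}(\rho)$, itself a genuine density operator. This last point is precisely what allows Theorem~\ref{the: negativity_Physical} to be applied to the non-physical map $\mathcal{E}^{-1}$ with no positivity obstruction: both the input $\rho$ and the output $\rho_0$ of $\mathcal{E}^{-1}$ are legitimate states.

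For the upper bound, apply Theorem~\ref{the: negativity_Physical} directly to the separable map $\mathcal{E}$, which gives $E_N(\rho) - E_N(\rho_0) \leq \eta(\mathcal{E})$; it then remains to show $\eta(\mathcal{E}) = 0$. Since $\mathcal{E}$ is CPTP we already have $\nu(\mathcal{E}) = 0$ from the definition in Eq.~\eqref{equ: Definition}, and a physical noise channel of the kind considered here (for instance a convex mixture of product Pauli channels, as in the Methods) admits a decomposition into product CPTP maps with nonnegative weights summing to one, so the optimum defining $\eta(\mathcal{E})$ is likewise attained at $\sum_i|q_i| = 1$, i.e.\ $\eta(\mathcal{E}) = \nu(\mathcal{E}) = 0$. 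Hence $\Delta E_N \leq 0$, which is the right-hand inequality and simply reflects that separable channels cannot increase entanglement.

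For the lower bound, apply Theorem~\ref{the: negativity_Physical} to the separable HPTP map $\mathcal{E}^{-1}$, now treating $\rho$ as the input state and $\rho_0 = \mathcal{E}^{-1}(\rho)$ as the output state. This yields $E_N(\rho_0) - E_N(\rho) \leq \eta(\mathcal{E}^{-1})$. Invoking the hypothesis $\eta(\mathcal{E}^{-1}) = \nu(\mathcal{E}^{-1})$ and rearranging gives $\Delta E_N = E_N(\rho) - E_N(\rho_0) \geq -\nu(\mathcal{E}^{-1})$, which together with the previous paragraph establishes Eq.~\eqref{equ: Negativity_noise}.

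Once Theorem~\ref{the: negativity_Physical} is available these steps are essentially immediate, so the real content sits in the hypotheses rather than the derivation. The main obstacle is, for a concrete noise model, to verify both that $\mathcal{E}^{-1}$ is separable and that its product-channel decomposition is optimal in the sense of Eq.~\eqref{equ: Definition}, i.e.\ $\eta(\mathcal{E}^{-1}) = \nu(\mathcal{E}^{-1})$; this equality is not automatic --- as noted after Definition~\ref{def: Separable}, a generic HPTP map need not even be separable --- so it has to be checked case by case, which is exactly the role of the explicit two-qubit examples in the Methods. A minor point also worth recording is that applying Theorem~\ref{the: negativity_Physical} to $\mathcal{E}^{-1}$ is legitimate despite $\mathcal{E}^{-1}$ being non-CP, because the trace-norm contraction used in its proof relies only on the CPTP property of the product factors $\mathcal{T}_i^A\otimes\mathcal{T}_i^B$ appearing in the decomposition of $\mathcal{E}^{-1}$, and the operators to which it is applied ($\rho$ and $\rho_0$) are genuine states.
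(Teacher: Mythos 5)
Your proof is correct and follows essentially the same route the paper intends: the corollary is a direct two-fold application of Theorem~\ref{the: negativity_Physical}, once to $\mathcal{E}$ (with $\eta(\mathcal{E})=\nu(\mathcal{E})=0$ for a convex product decomposition of the noise) and once to the HPTP map $\mathcal{E}^{-1}$ with the roles of input and output exchanged, using $\eta(\mathcal{E}^{-1})=\nu(\mathcal{E}^{-1})$. Your added remarks --- that $\eta(\mathcal{E})=0$ needs a nonnegative-coefficient product decomposition rather than bare separability in the sense of Definition~\ref{def: Separable}, and that the theorem applies to the non-CP map $\mathcal{E}^{-1}$ because both $\rho$ and $\rho_0=\mathcal{E}^{-1}(\rho)$ are genuine states --- are accurate and, if anything, slightly more careful than the paper's own presentation.
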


We note that several typical noise models \cite{Mangini2022} (see Methods) all fall into this category.
In the following, we numerically verify Eq. \eqref{equ: Negativity_noise} with these noise models applied to two-qubit quantum states.
In Fig. \ref{fig: Negativity} (a) we detect the change of logarithmic negativity between two qubits $\left|\Delta E_N\right| \equiv E_N\left(\rho_0\right)-E_N\left(\rho\right)$.
For each type of noise, we randomly sample 10000 two-qubit pure states as input states, i.e., $\rho_0 = \ket{\psi}\hspace{-1mm}\bra{\psi}$, where $E_N\left(\rho_0\right)$ approximately follow a Gaussian distribution within $\left(0, 1\right)$.
It is demonstrated that the bounds given by $\nu{\left(\mathcal{E}^{-1}\right)}$ (solid lines) are about twice as large as the maximal values of $\left|\Delta E_N\right|$.

These bounds are mathematically tight, but maybe not physically.
For example, if we consider the dephasing noise and let the output state be a product state $\rho^{[n]} = \ket{++\cdots +}\hspace{-1mm}\bra{++\cdots +}$ with $E_N\left(\rho^{[n]}\right) = 0$, then the input state can be reconstructed as
\begin{equation}
    \rho_0^{[n]} = \frac{1}{1-\varepsilon}\ket{++\cdots+}\hspace{-1mm}\bra{++\cdots+}-\frac{\varepsilon}{2^n\left(1-\varepsilon\right)}{\sum_{\{i_k=+,-\}}}\ket{i_1i_2\cdots i_n}\hspace{-1mm}\bra{i_1i_2\cdots i_n},
\end{equation}
whose logarithmic negativity is $E_N\left(\rho_0\right) = \log_2{\left[\frac{2^n+\left(2^n-2\right)\varepsilon}{2^n\left(1-\varepsilon\right)}\right]}$.
In this case, we have $\left|\Delta E_N\right| = \nu\left(\mathcal{E}^{-1}\right)$.
However, both $\rho_0$ and $\rho_0^{{\rm T}_B}$ are not positive, which is not a physical situation.
To further demonstrate this point, we provide numerical results in Fig. \ref{fig: Negativity} (b), where we choose mixture of pure states as input states $\rho_0 = \lambda_1\ket{\psi_1}\hspace{-1mm}\bra{\psi_1} + \lambda_2\ket{\psi_2}\hspace{-1mm}\bra{\psi_2}$.
Here $\ket{\psi_1}$ and $\ket{\psi_2}$ are randomly chosen orthonormal two-qubit pure states, while $\lambda_1$ and $\lambda_2$ are randomly chosen from $[-1, 1]$ and then normalized as $\lambda_1+\lambda_2=1$, thus maybe $\rho_0$ are not physical.
It is shown that the variation range of the negativity decrease in Fig. \ref{fig: Negativity} (b) is larger than that in Fig. \ref{fig: Negativity} (a) for each type of noise due to the violation of the positivity restrictions on $\rho_0$ \cite{Supplemental}.
Therefore, we expect that the bounds in Corollary \ref{coro: negativity} may be further tightened for specific noise models with much more careful analyses combined with physical requirements, e.g., the positive conditions on $\rho_0$ and $\rho$.

\begin{figure}
    \subfigure[$\rho_0 = \ket{\psi}\hspace{-1mm}\bra{\psi}$]{\includegraphics[width = 0.49\linewidth]{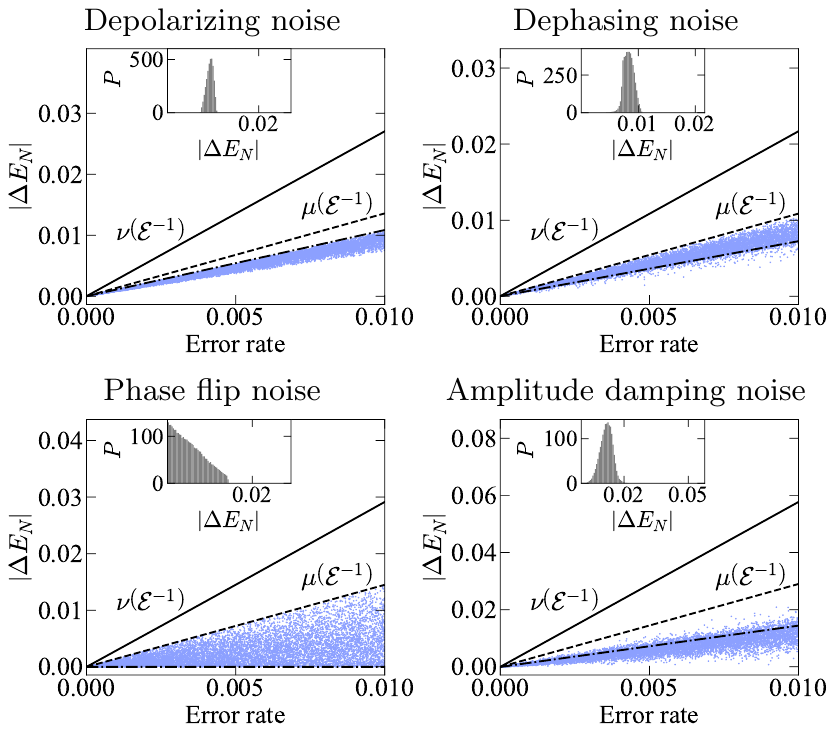}}
    \subfigure[$\rho_0 = \lambda_1\ket{\psi_1}\hspace{-1mm}\bra{\psi_1} + \lambda_2\ket{\psi_2}\hspace{-1mm}\bra{\psi_2}$]{\includegraphics[width = 0.49\textwidth]{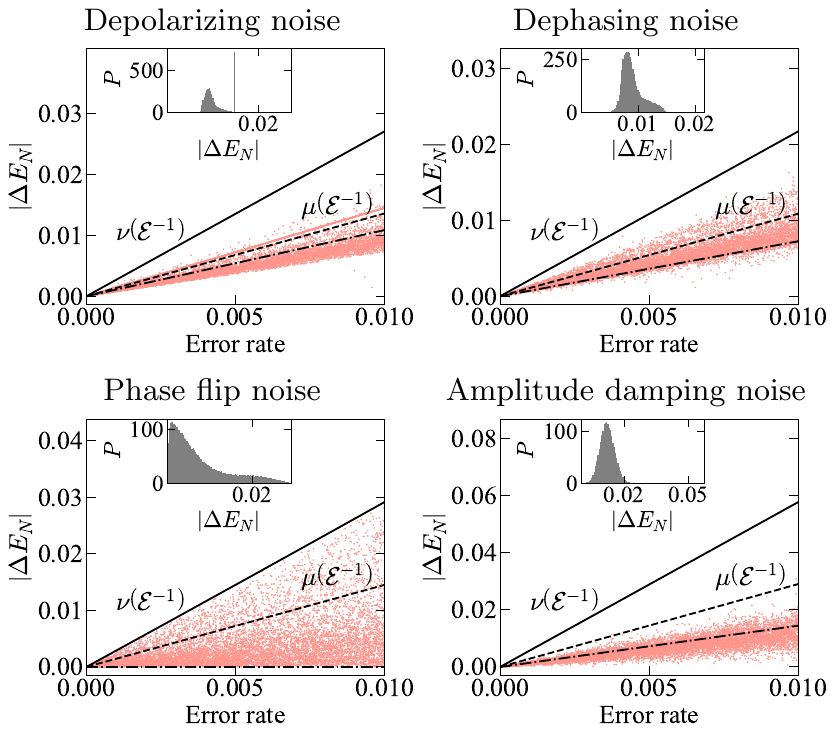}}
	\caption{The change of logarithmic negativity of quantum states $\left|\Delta E_N\right|$.
    The input states are randomly chosen as two-qubit (a) pure states $\rho_0 = \ket{\psi}\hspace{-1mm}\bra{\psi}$ or (b) mixture of pure states $\rho_0 = \lambda_1\ket{\psi_1}\hspace{-1mm}\bra{\psi_1} + \lambda_2\ket{\psi_2}\hspace{-1mm}\bra{\psi_2}$.
    $\lambda_1$ and $\lambda_2$ are randomly chosen from $[-1, 1]$ and normalized as $\lambda_1+\lambda_2=1$.}
    To benchmark, we plot the upper bounds given by the physical implementability of the noise inverse $\nu{\left(\mathcal{E}^{-1}\right)}$ (solid lines), the estimation values given by root-mean-square $\mu{\left(\mathcal{E}^{-1}\right)}$ (dashed lines), and the analytical values derived for the maximally entangled state (dash-dot lines).
    Insets show the probability distributions $P$ of $\left| \Delta E_N\right|$ for error rate $\varepsilon=0.01$.
	\label{fig: Negativity}
\end{figure}

Alternatively, we can use the root-mean-square of $\{q_i\}$ in Eq. \eqref{equ: Proof_negativity} as an estimation value, i.e.,
\begin{equation}
    \llvert \rho^{{\rm T}_B} \rrvert_1 = \llvert \sum_i{q_i}{\left[\mathcal{T}_i^A\otimes \mathcal{T}_i^B\left(\rho_0\right)\right]^{{\rm T}_B}}\rrvert_1 \approx \sqrt{\sum_{i}{|q_i|^2\llvert\left[\mathcal{T}_i^A\otimes \mathcal{T}_i^B\left(\rho_0\right)\right]^{{\rm T}_B}\rrvert_1^2}} \leq \sqrt{\sum_i{|q_i|^2\llvert\rho_0^{{\rm T}_B}\rrvert_1^2}}=\sqrt{\sum_i{|q_i|^2}}\llvert\rho_0^{{\rm T}_B}\rrvert_1.
\end{equation}
If we denote $\mu{\left(\mathcal{N}\right)} = \log_2{\sqrt{\sum_i{q_i^2}}}$, then the above relation can be expressed in a concise form
\begin{equation}
    E_N\left(\rho\right) - E_N\left(\rho_0\right) \lesssim \mu{\left(\mathcal{N}\right)}.
\end{equation}
We simultaneously plot $\mu{\left(\mathcal{E}^{-1}\right)}$ in Fig. \ref{fig: Negativity}, where we surprisingly find that $\mu{\left(\mathcal{E}^{-1}\right)}$ (dashed lines) appear to upper-bound the decrease of the state negativity $\left|\Delta E_N \right|$ for physical situations.

If there is only one positive coefficient in the decomposition $q_1 \approx 1+b\varepsilon$, and all other negative coefficients have the same order of magnitude as $\varepsilon$, (all four types of noise models fall into this category), then to the first order of $\varepsilon$ we have
\begin{equation}
    \begin{aligned}
    2^{\nu\left(\mathcal{N}\right)} &= \sum_i{|q_i|} \approx 1+2b\varepsilon,\\
    2^{\mu\left(\mathcal{N}\right)} &= \sqrt{\sum_i{q_i^2}} \approx 1+b\varepsilon.
    \end{aligned}
\end{equation}
As a result, for small $\varepsilon$ we have $\mu\left(\mathcal{N}\right)\approx \frac{1}{2}\nu\left(\mathcal{N}\right)$, which are demonstrated by the dashed and solid lines in Fig. \ref{fig: Negativity}.

Meanwhile, we note that the probability distributions $P\left(\left|\Delta E_N \right|\right)$ vary with different noise models, which are plotted in the insets of Fig. \ref{fig: Negativity} for a fixed error rate $\varepsilon = 0.01$ without loss of generality.
The total probability (i.e. the area of shadow) is normalized to $1$ for each subfigure.
For example, there exists a sharp peak in Fig. \ref{fig: Negativity} (b) for the depolarizing noise.
This phenomenon can be explained by the special property of the depolarizing noise, which can be written as 
\begin{equation}
    \mathcal{E}^{[n]}\left(\rho_0\right) = \left(1-\varepsilon\right)\rho_0+\frac{\varepsilon}{2^n}I^{[n]}.
\end{equation}
It implies that the spectrum of $\rho^{{\rm T}_B}$ is directly related to that of $\rho_0^{{\rm T}_B}$ by
\begin{equation}
    \lambda_i = \left(1-\varepsilon\right)\lambda^0_i+\frac{\varepsilon}{2^n}
\end{equation}
for $i=1, 2, \cdots 2^n$.
Therefore, if half of the eigenvalues of $\rho_0^{{\rm T}_B}$ are positive and half are negative, and $\varepsilon$ is small which does not change the sign of $\lambda_i$, the trace norm of the output state can be analytically derived, i.e.,
\begin{equation}
    \llvert \rho^{{\rm T}_B} \rrvert_1 = \left(1-\varepsilon\right) \llvert \rho_0^{{\rm T}_B} \rrvert_1,
\end{equation}
which gives the decrease of logarithmic negativity $\left|\Delta E_N\right| = \log_2{\left[\frac{1}{1-\varepsilon}\right]}$.
Meanwhile, we analytically derive $\left|\Delta E_N\right|$ for the maximally entangled state in Supplementary Information \cite{Supplemental} and show them as the dash-dot lines in Fig. \ref{fig: Negativity}.
For the depolarizing noise and the phase flip noise, they serve as the supremum and the infimum of $\left|\Delta E_N\right|$ respectively in Fig. \ref{fig: Negativity} (a), while there is no apparent feature for the amplitude damping and the dephasing noise.
These properties reveal another aspect of the decoherence effect, namely the variety of $\left|\Delta E_N\right|$ for different input states, which we leave for further study.

\section*{Discussion}
In this work, we build two concise and essential inequalities connecting the output state to the input state, where the physical implementability of the noise inverse upper-bounds the decrease of the purity and logarithmic negativity of quantum states.
Central to this is the optimal decomposition of the noise inverse via mutually orthonormal unitaries or product channels, which applies to several commonly-adopted noise models.
Specifically, the former condition is satisfied by the depolarizing, dephasing, and phase flip noise, while the latter one is additionally satisfied by the amplitude damping noise.
These relations imply that the physical implementability of the noise inverse, which is originally proposed to describe the sampling cost for error mitigation and the distance of the noise inverse away from the set of CPTP channels, is a better characterization for the decoherence effect of a noise channel than the commonly used error rate.

Compared with previous works on the entanglement or coherence properties of quantum channels \cite{Gour2021, Wu2022} that describe the potential of a quantum channel to generate entanglement, our study provides a characterization of how destructive a noise channel is, which has applications in benchmarks of quantum hardware.
For instance, when combined with quantum gate set tomography \cite{Merkel2013, Nielsen2021} to obtain a full characterization for noise models of quantum gates, one may estimate whether a quantum device is capable of generating the highly-entangled states required for quantum supremacy instead of directly detecting quantum entanglement in the output state, which is generally a difficult task with exponentially increasing experimental cost \cite{Liu2022}.
Another interesting problem is to apply our results to the tensor network representation of quantum noise \cite{Guo2022}, which naturally captures the correlation of different qubits involved in the noise channel.
Therefore, we believe that our work is a big step forward and opens a new avenue toward the theoretical and experimental research of noise channels from the entanglement perspective.

\section*{Methods}
\subsection{Physical implementability}
The core physical quantity in our work, the physical implementability of an HPTP map, derives from the quasi-probability method \cite{Temme2017, Endo2018} for quantum error mitigation and its variants \cite{Huo2018, Cao2021, Guo2022, Piveteau2022}, which involve the simulation of the inverse of noise channels with physically implementable quantum channels.
The sampling cost for implementing an HPTP map $\mathcal{N}$ is characterized by its physical implementability \cite{Jiang2021,Regula2021}, defined as
\begin{equation}
    \nu\left(\mathcal{N}\right) := \log_2{\min_{\mathcal{T}_i \text{ is CPTP}}{\left\{\left.\sum_{i}|q_i|\right|\mathcal{N} = \sum_i q_i \mathcal{T}_i, q_i\in \mathbb{R}\right\}}}.
\end{equation}
During the proof of our main results, we take advantage of the fact that the state purity remains unchanged under unitary channels, while the state negativity is non-increasing under product channels.
It indicates that if the optimal decomposition of an HPTP map gives unitary (product) channels, we can bound the increase of purity (negativity) with its physical implementability.
On the other hand, for a mixed unitary map, if it is decomposed by a set of mutually orthogonal unitary channels, such a decomposition is optimal.

Numerically, the physical implementability can be calculated via semidefinite programming (SDP) \cite{Jiang2021}.
Alternatively, we provide the upper and lower bounds for the physical implementability in terms of the maximum and minimum eigenvalues of the Choi matrix in Supplementary Information \cite{Supplemental}, which may inspire efficient estimation methods for the physical implementability with numerical approaches, such as the tensor network representation \cite{Verstraete2008, Orus2014, Cirac2021} of a general noise channel \cite{Guo2022}, instead of solving the entire optimization problem.

\subsection{Noise models and physical implementability}
In the following, we summarize four commonly used noise models and the physical implementability of their inverse, which enables the application of our results on these noise channels.
We will show that Corollary \ref{coro: purity} applies to the multiqubit Pauli noise, depolarizing noise, and dephasing noise, while Corollary \ref{coro: negativity} holds for the multiqubit amplitude damping noise additionally.
\subsubsection{The multiqubit Pauli noise}
We define the multiqubit Pauli noise as
\begin{equation}
    \mathcal{E}^{[n]}\left(\rho^{[n]}\right) = \left(1-\varepsilon\right) \rho^{[n]} + \varepsilon \left(\bigotimes_{i=1}^n{\sigma_{\alpha_i}^i}\right)\rho^{[n]}\left(\bigotimes_{i=1}^n{\sigma_{\alpha_i}^i}\right)^{\dagger},
\end{equation}
where $\sigma_{\alpha_i}^i$ represents the Pauli matrix $\sigma_{\alpha_i}$ applied on the $i$-th site.
The inverse of this noise is analytically derived as
\begin{equation}
    {\mathcal{E}^{n}}^{-1}\left(\rho^{[n]}\right) = \frac{1-\varepsilon}{1-2\varepsilon} \rho^{[n]} - \frac{\varepsilon}{1-2\varepsilon} \left(\bigotimes_{i=1}^n{\sigma_{\alpha_i}^i}\right)\rho^{[n]}\left(\bigotimes_{i=1}^n{\sigma_{\alpha_i}^i}\right)^{\dagger},
\end{equation}
which provides the optimal decomposition, with the physical implementability being $\nu{\left({\mathcal{E}^{[n]}}^{-1}\right)} = \log_2{\left[\frac{1}{1-2\varepsilon}\right]}$.
Each term in the decomposition is a unitary and product channel, hence both Corollary $\ref{coro: purity}$ and $\ref{coro: negativity}$ hold.
The two-qubit phase flip noise in Fig. \ref{fig: Negativity}(c) corresponds to taking $\alpha_1=\alpha_2=3$ here, i.e., $\mathcal{E}^{[2]}\left(\rho^{[2]}\right) = \left(1-\varepsilon\right)\rho^{[2]}+\varepsilon\left(\sigma^1_z\otimes\sigma^2_z\right)\rho^{[2]}{\left(\sigma^1_z\otimes\sigma^2_z\right)}^{\dagger}$.

\subsubsection{The multiqubit depolarizing noise}
The $n$-qubit depolarizing noise is defined as
\begin{equation}
    \begin{aligned}
        \mathcal{E}^{[n]}\left(\rho^{[n]}\right) &= \left(1-\varepsilon\right)\rho^{[n]} + \frac{\varepsilon}{2^n}I^{[n]}\\
        &= \left(1-\varepsilon\right)\rho^{[n]} + \frac{\varepsilon}{4^n}\sum_{\{\alpha_i\}}\left[\left(\bigotimes_{i=1}^n{\sigma_{\alpha_i}^i}\right)\rho^{[n]}\left(\bigotimes_{i=1}^n{\sigma_{\alpha_i}^i}\right)^{\dagger}\right],
    \end{aligned}
\end{equation}
where the indices $\alpha_i$ are summed from 0 to 3.
Its inverse can be directly calculated with the first equality
\begin{equation}
    \begin{aligned}
        {\mathcal{E}^{[n]}}^{-1}\left(\rho^{[n]}\right) &= \frac{1}{1-\varepsilon}\rho^{[n]} - \frac{\varepsilon}{\left(1-\varepsilon\right)2^n} I^{[n]}\\
        &= \frac{1}{1-\varepsilon}\rho^{[n]} - \frac{\varepsilon}{\left(1-\varepsilon\right)4^n}\sum_{\{\alpha_i\}}\left[\left(\bigotimes_{i=1}^n{\sigma_{\alpha_i}^i}\right)\rho^{[n]}\left(\bigotimes_{i=1}^n{\sigma_{\alpha_i}^i}\right)^{\dagger}\right].\label{equ: Depolarizing_inverse}
    \end{aligned}
\end{equation}
which is decomposed by unitary and product channels.
The unitaries in the decomposition of $\mathcal{E}^{[n]}$ (and ${\mathcal{E}^{[n]}}^{-1}$) are also mutually orthogonal
\begin{equation}
    \Tr{\left[\left(\bigotimes_{i=1}^n{\sigma_{\alpha_i}^i}\right)^{\dagger}\left(\bigotimes_{i=1}^n{\sigma_{\beta_i}^i}\right)\right]} = 2^n \prod_{i=1}^{n}{\delta_{\alpha_i\beta_i}},
\end{equation}
allowing us to analytically calculate the physical implementability as
\begin{equation}
    \nu\left({\mathcal{E}^{[n]}}^{-1}\right) = \log_2\left[\left(\frac{1}{1-\varepsilon} - \frac{\varepsilon}{\left(1-\varepsilon\right)4^n}\right) + \left(4^n - 1\right)\frac{\varepsilon}{\left(1-\varepsilon\right)4^n}\right] = \log_2\left[\frac{1+ \left(1 - \frac{2}{4^n}\right)\varepsilon}{1-\varepsilon}\right].\label{Physical_depolarizing}
\end{equation}

\subsubsection{The multiqubit dephasing noise}
The $n$-qubit dephasing noise is defined as
\begin{equation}
    \mathcal{E}^{[n]}\left(\rho^{[n]}\right) = \left(1-\varepsilon\right)\rho^{[n]} + \frac{\varepsilon}{2^n}\sum_{\{\alpha_i \in \{0, 3\}\}}\left[\left(\bigotimes_{i=1}^n{\sigma_{\alpha_i}^i}\right)\rho^{[n]}\left(\bigotimes_{i=1}^n{\sigma_{\alpha_i}^i}\right)^{\dagger}\right],\label{equ: Dephasing}
\end{equation}
where the summation only contains $\sigma_0$ and $\sigma_3$.
We assume that the inverse of $\mathcal{E}^{[n]}$ takes a similar form
\begin{equation}
    {\mathcal{E}^{[n]}}^{-1}\left(\rho^{[n]}\right) = A\rho^{[n]} - B\sum_{\{\alpha_i \in \{0, 3\}\}}\left[\left(\bigotimes_{i=1}^n{\sigma_{\alpha_i}^i}\right)\rho^{[n]}\left(\bigotimes_{i=1}^n{\sigma_{\alpha_i}^i}\right)^{\dagger}\right]
\end{equation}
and solve the undetermined coefficients $A$ and $B$.
Finally, we obtain
\begin{equation}
    {\mathcal{E}^{[n]}}^{-1}\left(\rho^{[n]}\right) = \frac{1}{1-\varepsilon}\rho^{[n]} - \frac{\varepsilon}{\left(1-\varepsilon\right)2^n}\sum_{\{\alpha_i \in \{0, 3\}\}}\left[\left(\bigotimes_{i=1}^n{\sigma_{\alpha_i}^i}\right)\rho^{[n]}\left(\bigotimes_{i=1}^n{\sigma_{\alpha_i}^i}\right)^{\dagger}\right],
\end{equation}
where each term is product channels and mutually orthonormal unitaries.
Similarly, we can derive the physical implementability of the noise inverse
\begin{equation}
    \nu\left({\mathcal{E}^{[n]}}^{-1}\right) = \log_2\left[\left(\frac{1}{1-\varepsilon} - \frac{\varepsilon}{\left(1-\varepsilon\right)2^n}\right) + \left(2^n - 1\right)\frac{\varepsilon}{\left(1-\varepsilon\right)2^n}\right] = \log_2\left[\frac{1+ \left(1 - \frac{2}{2^n}\right)\varepsilon}{1-\varepsilon}\right].\label{equ: Physical_dephasing}
\end{equation}

\subsubsection{The amplitude damping noise}
The amplitude damping noise is commonly adopted to describe the loss of photons in quantum systems \cite{Nielsen2009}, which is defined with the Kraus operator $E_0 = \ket{0}\hspace{-1mm}\bra{0} + \sqrt{1-\varepsilon}\ket{1}\hspace{-1mm}\bra{1}$ and $E_1 = \sqrt{\varepsilon}\ket{0}\hspace{-1mm}\bra{1}$ with the operator-sum representation
\begin{equation}
    \mathcal{E}\left(\rho\right) = E_0\rho E_0^{\dagger} + E_1\rho E_1^{\dagger}.
\end{equation}
The physical implementability of $\mathcal{E}^{-1}$ was analytically studied in \cite{Jiang2021}, given by $\nu\left(\mathcal{E}^{-1}\right) = \log_2\left[\frac{1+\varepsilon}{1-\varepsilon}\right]$.
The multiqubit amplitude damping noise is just defined as the tensor product of single-qubit noise channels, i.e., 
\begin{equation}
    \mathcal{E}^{[n]} = \bigotimes_{i=1}^n\mathcal{E}^i
\end{equation}
with the physical implementability of the noise inverse being $\nu{\left({\mathcal{E}^{[n]}}^{-1}\right)} = n\log_2\left[\frac{1+\varepsilon}{1-\varepsilon}\right]$.
Therefore, Corollary \ref{coro: negativity} applies to this noise model.

\subsection{Separable HPTP maps}
We notice that any bipartite quantum state, separable or entangled, can be decomposed as the superposition of product states $\rho = \sum q_i\rho_i^A\otimes \rho_i^B$ if negative coefficients $q_i$ are allowed.
However, this property does not hold for quantum maps.
In this section, we provide detailed proof of the existence of HPTP maps that do not fall into the class of separable HPTP maps defined in Definition \ref{def: Separable}.

Consider a general HPTP map $\mathcal{N}$, whose Choi operator satisfies the HP condition
\begin{equation}
    \Lambda_{\mathcal{N}}^{\dagger} = \Lambda_{\mathcal{N}}
\end{equation}
and the TP condition
\begin{equation}
    \Tr{\left[\Lambda_{\mathcal{N}}\right]}_{\tau} = I_{\sigma}.
\end{equation}
We now decompose the physical part of $\Lambda_{\mathcal{N}}$ with the computational basis
\begin{equation}
    \Lambda_{\mathcal{N}} = \sum_{i=1}^{d}\sum_{j=1}^{d}\ket{i}\hspace{-1mm}\bra{j}^{\sigma}\otimes O_{ij}^{\tau},
\end{equation}
then the HP and TP conditions are equivalent to
\begin{equation}
    O_{ij}^{\tau\dagger} = O_{ji}^{\tau},\\
    \Tr{\left[O_{ij}^{\tau}\right]} = \delta_{ij}.
\end{equation}

Now we try to decompose an arbitrary bipartite HPTP map, whose Choi operator is decomposed as
\begin{equation}
    \Lambda_{\mathcal{N}^{AB}} = \sum_{ijkl}\ket{i}\hspace{-1mm}\bra{j}^{A\sigma}\otimes\ket{k}\hspace{-1mm}\bra{l}^{B\sigma}\otimes O_{ijkl}^{AB\tau},
\end{equation}
satisfying that
\begin{equation}
    \Tr{\left[O_{ijkl}^{AB\tau}\right]} = \delta_{ij}\delta_{kl}.
\end{equation}
If it can be decomposed as superpositions of product channels, i.e.,
\begin{equation}
    \Lambda_{\mathcal{N}^{AB}} = \sum_m{q_m \left(\Lambda_{\mathcal{T}_m^A}\otimes \Lambda_{\mathcal{T}_m^B}\right)}.
\end{equation}
where product channels are further decomposed as
\begin{equation}
    \Lambda_{\mathcal{T}_m^A} = \sum_{ij}\ket{i}\hspace{-1mm}\bra{j}^{A\sigma}\otimes O_{mij}^{A\tau},\\
    \Lambda_{\mathcal{T}_m^B} = \sum_{kl}\ket{k}\hspace{-1mm}\bra{l}^{B\sigma}\otimes O_{mkl}^{B\tau}.
\end{equation}
We reach
\begin{equation}
    \sum_{ijkl}\ket{i}\hspace{-1mm}\bra{j}^{A\sigma}\otimes\ket{k}\hspace{-1mm}\bra{l}^{B\sigma}\otimes O_{ijkl}^{AB\tau} = \sum_m\left(q_m\sum_{ijkl}\ket{i}\hspace{-1mm}\bra{j}^{A\sigma}\otimes\ket{k}\hspace{-1mm}\bra{l}^{B\sigma}\otimes O_{mij}^{A\tau}\otimes O_{mkl}^{B\tau}\right).
\end{equation}
Therefore, we need to find the decomposition of $O_{ijkl}^{AB\tau}$
\begin{equation}
    O_{ijkl}^{AB\tau} = \sum_m{q_mO_{mij}^{A\tau}\otimes O_{mkl}^{B\tau}},\label{equ: General_decom}
\end{equation}
satisfying that
\begin{equation}
    \Tr{\left[O_{mij}^{A\tau}\right]} = \delta_{ij},\\
    \Tr{\left[O_{mkl}^{B\tau}\right]} = \delta_{kl}.
\end{equation}
Such a decomposition cannot always be found.
For example, for $i\neq j$, the above conditions give $\Tr{\left[O_{mij}^{A\tau}\right]}=0$.
Then after taking the partial trace of Eq. \eqref{equ: General_decom}, we require that
\begin{equation}
    \Tr_{A}\left[O_{ijkl}^{AB\tau}\right]=0,
\end{equation}
which is a much stronger condition than what we have for the original map $\Tr{\left[O_{ijkl}^{AB\tau}\right]}=0$.
We note that some commonly encountered two-qubit quantum gates, such as the CNOT gate and the SWAP gate, are not separable due to the above argument.

\subsection{Partial transpose of linear maps}
When evaluating the change of logarithmic negativity after the implementation of a noise channel, one has to connect the partial transpose of the output state and the input state.
In the following lemma, we prove that such a connection can be built in terms of the partial transpose of the linear map.
\begin{lemma}
    For a linear map $N$ on a bipartite system $A\otimes B$, the partial transpose of the output operator satisfies
    \begin{equation}
        \rho^{{\rm T}_B} = \mathcal{N}^{{\rm T}_B}\left(\rho_0^{{\rm T}_B}\right),
    \end{equation}
    where the (partial) transpose of a linear map $\mathcal{N}$ is defined by the (partial) transpose of its Choi operator, i.e.,
    \begin{equation}
            \Lambda_{\mathcal{N}^{{\rm T_{(B)}}}} = \Lambda_{\mathcal{N}}^{{\rm T_{(B)}}}.
    \end{equation}
\end{lemma}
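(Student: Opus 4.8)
The plan is to prove the identity by a direct computation in the computational basis, starting from the Choi representation $\mathcal{N}(\rho)=\Tr_\sigma[(\rho^{\rm T}\otimes I_\tau)\Lambda_\mathcal{N}]$ recalled at the beginning of the paper. The one convention that must be fixed first is the meaning of the partial transpose of a map: the Choi operator lives on $(A\sigma\otimes A\tau)\otimes(B\sigma\otimes B\tau)$, and ${\rm T}_B$ acting on it transposes the entire $B$ block $B\sigma\otimes B\tau$ \emph{jointly}. This is the convention under which $\Lambda_{\mathcal{N}^{{\rm T}_B}}=\Lambda_\mathcal{N}^{{\rm T}_B}$ is compatible with the Choi--Jamio\l{}kowski isomorphism, so that a product map $\mathcal{N}^A\otimes\mathcal{N}^B$ is partially transposed factor by factor; the full transpose in the lemma is the special case in which $A$ is trivial.

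First I would expand $\Lambda_\mathcal{N}=\sum_{i_Ai_Bj_Aj_B}|i_A\rangle\langle j_A|^{A\sigma}\otimes|i_B\rangle\langle j_B|^{B\sigma}\otimes M_{i_Ai_B,j_Aj_B}$, where $M_{i_Ai_B,j_Aj_B}=\mathcal{N}(|i_Ai_B\rangle\langle j_Aj_B|)$ is the output operator on $A\tau\otimes B\tau$. Writing $\rho_0$ in the same basis, the Choi formula gives $\rho=\mathcal{N}(\rho_0)=\sum(\rho_0)_{i_Ai_B,j_Aj_B}\,M_{i_Ai_B,j_Aj_B}$; taking matrix elements and applying ${\rm T}_B$ on the output, which swaps the $B$ ket and bra labels of each $M$, produces an explicit formula for $\rho^{{\rm T}_B}$ in terms of the operators $M_{i_Ai_B,j_Aj_B}^{{\rm T}_B}$.

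Second I would compute the right-hand side. Partially transposing $\Lambda_\mathcal{N}$ as fixed above sends $|i_B\rangle\langle j_B|^{B\sigma}$ to $|j_B\rangle\langle i_B|^{B\sigma}$ and $M_{i_Ai_B,j_Aj_B}$ to $M_{i_Ai_B,j_Aj_B}^{{\rm T}_B}$. Reading off the map induced by this transposed Choi operator through the same Choi formula, and then evaluating it on $\rho_0^{{\rm T}_B}$, the transpose of the $B\sigma$ labels cancels precisely against the partial transpose already applied to $\rho_0$, while the transpose on $B\tau$ survives; the resulting expression then matches, term by term, the one obtained for $\rho^{{\rm T}_B}$ in the previous step, giving $\rho^{{\rm T}_B}=\mathcal{N}^{{\rm T}_B}(\rho_0^{{\rm T}_B})$.

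The main obstacle is purely bookkeeping: tracking the four tensor factors $A\sigma,B\sigma,A\tau,B\tau$ together with the three distinct transposes (on $\rho_0$, on $\rho$, and on $\Lambda_\mathcal{N}$), and being scrupulous that ${\rm T}_B$ on the Choi operator acts on $B\sigma\otimes B\tau$ simultaneously. As a cross-check, or an alternative basis-free route, I would note that both sides are bilinear in $(\Lambda_\mathcal{N},\rho_0)$, that every linear map on $A\otimes B$ is a linear combination of product maps $\mathcal{N}^A\otimes\mathcal{N}^B$, and that every operator is a sum of products $X_A\otimes X_B$; the statement then collapses to the one-line single-system identity $(\mathcal{M}(X))^{\rm T}=\mathcal{M}^{\rm T}(X^{\rm T})$ with $\Lambda_{\mathcal{M}^{\rm T}}=\Lambda_\mathcal{M}^{\rm T}$, which is immediate from the Choi formula.
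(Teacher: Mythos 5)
Your proposal is correct and follows essentially the same route as the paper: the paper expands $\rho_0$ and $\Lambda_{\mathcal{N}}$ in operator Schmidt (product) form and pushes the three transposes through the Choi formula $\Tr_{\sigma}[(\rho_0^{\rm T}\otimes I_{\tau})\Lambda_{\mathcal{N}}]$, arriving at $\Tr_{\sigma}[(\rho_0^{{\rm T}_A}\otimes I_{\tau})\Lambda_{\mathcal{N}}^{{\rm T}_B}]$ exactly as in your ``cancellation'' step, with the same convention that ${\rm T}_B$ on the Choi operator acts jointly on $B\sigma\otimes B\tau$. Your computational-basis expansion and your closing reduction to the single-system identity $(\mathcal{M}(X))^{\rm T}=\mathcal{M}^{\rm T}(X^{\rm T})$ are only cosmetic variants of that same bookkeeping.
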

\begin{proof}
    We first take the Schmidt decomposition of operators
    \begin{equation}
        \rho_0 = \sum_k {\alpha_k \rho_k^A\otimes \rho_k^B},\\
        \Lambda_{\mathcal{N}} = \sum_k{\beta_k\Lambda_k^A\otimes \Lambda_k^B},
    \end{equation}
    then the partially transposed output operator is calculated in terms of the Choi operator as
    \begin{equation}
        \begin{aligned}
            \rho^{{\rm T}_B} &= \Tr_{\sigma}{\left[\left(\rho_0^{{\rm T}}\otimes I_{\tau}\right)\Lambda_{\mathcal{N}}\right]}^{{\rm T}_B}\\
            &= \Tr_{\sigma}{\left[\sum_{i,j} {\alpha_i \beta_j\left(\left({\rho_i^A}^{{\rm T}}\otimes I_{\tau}^A\right)\otimes \left({\rho_i^B}^{{\rm T}}\otimes I_{\tau}^B\right)\right)\left(\Lambda_j^A\otimes \Lambda_j^B\right)}\right]}^{{\rm T}_B}\\
            &= \Tr_{\sigma}{\left[\sum_{i,j} {\alpha_i \beta_j\left(\left({\rho_i^A}^{{\rm T}}\otimes I_{\tau}^A\right)\otimes {\Lambda_j^B}^{{\rm T}}\right)\left(\Lambda_j^A\otimes \left(\rho_i^B\otimes I_{\tau}^B\right)\right)}\right]}\\
            &= \Tr_{\sigma}{\left[\sum_{i,j} {\alpha_i \beta_j\left(\left({\rho_i^A}^{{\rm T}}\otimes I_{\tau}^A\right)\otimes \left(\rho_i^B\otimes I_{\tau}^B\right)\right)\left(\Lambda_j^A\otimes {\Lambda_j^B}^{{\rm T}}\right)}\right]}\\
            &= \Tr_{\sigma}{\left[\left(\rho_0^{{\rm T}_A}\otimes I_{\tau}\right)\Lambda_{\mathcal{N}}^{{\rm T}_B}\right]}= \Tr_{\sigma}{\left[\left(\left(\rho_0^{{\rm T}_B}\right)^{\rm T}\otimes I_{\tau}\right)\Lambda_{\mathcal{N}^{{\rm T}_B}}\right]}\\
            &= \mathcal{N}^{{\rm T}_B}\left(\rho_0^{{\rm T}_B}\right).
        \end{aligned}
    \end{equation}
    For the third equality, the partial transpose will introduce a permutation between ${\rho_i^B}^{{\rm T}}\otimes I_{\tau}^B$ and $\Lambda_j^B$, while the permutation in the fourth equality comes from the partial trace of $\sigma$ part and the identity of $\tau$ part.
\end{proof}

\section*{Data Availability}
The datasets generated and analyzed during the current study are available from the corresponding author upon reasonable request.

\section*{acknowledgments}
    We thank Yanzhen Wang for helpful discussions.
    This work is supported by the National Natural Science Foundation of China (NSFC) (Grant No. 12174214 and No. 92065205), the National Key R\&D Program of China (Grant No. 2018YFA0306504), and the Innovation Program for Quantum Science and Technology (Grant No. 2021ZD0302100).

\section*{Author Contributions}
Y.G. conceived, designed, and performed the numerical experiments.
Y.G. and S.Y. analyzed the data and wrote the paper.
S.Y. contributed analysis tools.

\section*{Competing Interests}
The authors declare no competing interests.

\bibliography{ref}

\appendix
\renewcommand{\thesection}{S-\arabic{section}} \renewcommand{\theequation}{S%
\arabic{equation}} \setcounter{equation}{0} \renewcommand{\thefigure}{S%
\arabic{figure}} \setcounter{figure}{0}
\section*{Supplementary Information}
In this Supplementary Information, we provide more details on the Choi operator representation, orthogonally mixed unitary noise, an example of purity change, numerical results for mixed states, decoherence effects on the maximally entangled state, and an estimation method for physical implementability.
\subsection{Choi operator representation}
We denote the Hilbert space of a quantum system as $\mathcal{H}$, where the space of linear operators is labeled as $\mathscr{L}(\mathcal{H})$.
A quantum state is represented by a density operator $\rho \in \mathscr{L}(\mathcal{H})$, which is a Hermitian ($\rho = \rho^{\dagger}$) and positive semidefinite ($\rho \geq 0$) operator with unit trace ($\Tr{[\rho]} = 1$).

A quantum channel $\mathcal{T}$ is a linear map between two operator spaces $\mathcal{T}: \mathscr{L}(\mathcal{H})\rightarrow \mathscr{L}(\mathcal{H})$ with the completely positive (CP) and trace-preserving (TP) conditions.
We say that a linear map $\mathcal{N}$ is
\begin{itemize}
    \item trace-preserving (TP), if $\Tr{[\mathcal{N}(O)]} = \Tr{[O]}$ for $\forall O\in \mathscr{L}(\mathcal{H})$,
    \item Hermitian-preserving (HP), if $\mathcal{N}(O)^{\dagger} = \mathcal{N}(O)$ for $\forall O\in \mathscr{L}(\mathcal{H})$ with $O^{\dagger} = O$,
    \item positive, if $\mathcal{N}(O) \geq 0$ for $\forall O\in \mathscr{L}(\mathcal{H})$ with $O \geq 0$,
    \item completely positive (CP), if $\mathcal{I}_{A^{\prime}}\otimes \mathcal{N}_A$ is positive for an arbitrary ancillary system $A^{\prime}$.
\end{itemize}

The most commonly adopted representation of a quantum channel $\mathcal{T}$ is the Choi operator \cite{Choi1975}, which is defined on a joint system as
    
\vspace{-3mm}
\begin{align}
    \Lambda_{\mathcal{T}} := \left(\mathcal{I}_{\sigma}\otimes \mathcal{T}_{\tau}\right)\left(d\ket{\Phi}\hspace{-1mm}\bra{\Phi}_{\sigma\tau}\right), \label{equ:Choi_operator_sup}
\end{align}
where $d$ is the dimension of the Hilbert space.
$\tau$ and $\sigma$ are the indices of the physical system and the ancillary system (which is isomorphic to the physical system) respectively.
$\ket{\Phi}$ is the maximally entangled state 
    
\vspace{-3mm}
\begin{align}
    \ket{\Phi}_{\sigma\tau} = \frac{1}{\sqrt{d}}\sum_{i=0}^{d-1}\ket{i_{\sigma}i_{\tau}}\label{equ:Maximally_entangled}
\end{align}
with $\left\{\ket{i}_{\tau}\right\}$ and $\left\{\ket{i}_{\sigma}\right\}$ being the orthonormal basis of the physical system and the ancillary system respectively.
On the same basis, the matrix elements of the Choi operator are
    
\vspace{-3mm}
\begin{align}
    \bra{i_{\sigma}k_{\tau}}\Lambda_{\mathcal{T}}\ket{j_{\sigma}l_{\tau}} = \bra{k}\mathcal{T}\left(\ket{i}\hspace{-1mm}\bra{j}\right)\ket{l},\label{equ:Choi_matrix}
\end{align}
which is just the matrix representation of map $\mathcal{T}$ in an orthonormal basis of $\mathscr{L}(\mathcal{H})$ with reshuffled physical indices.
In this sense, we can consider $\sigma$ and $\tau$ as the input and output indices respectively.
Then the implementation of $\mathcal{T}$ can be calculated in terms of the Choi operator as
    
\vspace{-3mm}
\begin{align}
    \mathcal{T}\left(\rho\right) = \Tr_{\sigma}{\left[\left(\rho^{{\rm T}}\otimes I_{\tau}\right)\Lambda_{\mathcal{T}}\right]}.
\end{align}

The Choi representation can be directly generalized to any linear map $\mathcal{N}$ beyond CPTP ones with the same definition as Eq. \eqref{equ:Choi_operator_sup}.
Due to the Choi-Jamio\l{}kowski isomorphism, the conditions mentioned above for a linear map $\mathcal{N}$ are guaranteed by the following properties of $\Lambda_{\mathcal{N}}$ \cite{Jamiolkowski1972}
\begin{itemize}
    \item $\mathcal{N}$ is TP $\Longleftrightarrow$ $\Tr_{\tau}[\Lambda_{\mathcal{N}}] = I_{\sigma}$.
    \item $\mathcal{N}$ is HP $\Longleftrightarrow$ $\Lambda_{\mathcal{N}}^{\dagger} = \Lambda_{\mathcal{N}}$.
    \item $\mathcal{N}$ is CP $\Longleftrightarrow$ $\Lambda_{\mathcal{N}} \geq 0$.
\end{itemize}

\subsection{Orthogonal mixed unitary noise}
We say that a noise channel $\mathcal{E}$ is an orthogonal mixed unitary noise if it satisfies the condition of Corollary 1 in the main text.
In the following lemmas, We prove that the composition of orthogonally mixed unitary noise channels with unitary channels or the tensor product of orthogonally mixed unitary noise channels is also an orthogonally mixed unitary noise.
\begin{lemma}
    For an orthogonally mixed unitary noise $\mathcal{E}$, $\mathcal{U}\circ \mathcal{E}\circ \mathcal{V}$ is also an orthogonally mixed unitary noise for any unitary channels $\mathcal{U}\left(\cdot\right) = U\left(\cdot\right)U^{\dagger}$ and $\mathcal{V}\left(\cdot\right) = V\left(\cdot\right)V^{\dagger}$.
\end{lemma}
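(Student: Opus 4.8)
The plan is to work directly at the level of the unitary decomposition of $\mathcal{E}$ and show that composing with fixed unitaries on either side preserves both the mixed-unitary structure and the mutual orthogonality of the constituent unitaries. Concretely, suppose $\mathcal{E}(\cdot) = \sum_i q_i W_i(\cdot)W_i^{\dagger}$ is the optimal orthogonal mixed-unitary decomposition, meaning the unitaries satisfy $\Tr[W_i^{\dagger}W_j] = d\,\delta_{ij}$ (the Hilbert–Schmidt orthogonality condition that makes Eq.~\eqref{equ: Phys_orthogonal} apply). Then
\begin{equation}
    \mathcal{U}\circ\mathcal{E}\circ\mathcal{V}(\cdot) = \sum_i q_i\, (UW_iV)(\cdot)(UW_iV)^{\dagger},
\end{equation}
so the new constituents are $\widetilde{W}_i := UW_iV$, which are manifestly unitary, and the coefficients $q_i$ are unchanged. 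The first key step is thus purely algebraic: identify the transformed decomposition.

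The second and essential step is to verify that the $\widetilde{W}_i$ remain mutually orthogonal. This follows from the cyclicity of the trace together with unitarity of $U$ and $V$:
\begin{equation}
    \Tr[\widetilde{W}_i^{\dagger}\widetilde{W}_j] = \Tr[V^{\dagger}W_i^{\dagger}U^{\dagger}UW_jV] = \Tr[V^{\dagger}W_i^{\dagger}W_jV] = \Tr[W_i^{\dagger}W_j] = d\,\delta_{ij},
\end{equation}
so the orthogonality is exactly preserved. Since $\mathcal{U}\circ\mathcal{E}\circ\mathcal{V}$ is again HPTP (composition of HPTP maps, as $\mathcal{U},\mathcal{V}$ are CPTP hence HPTP) and admits a decomposition into mutually orthogonal unitaries, the result of \cite{Jiang2021} invoked around Eq.~\eqref{equ: Phys_orthogonal} applies, and $\mathcal{U}\circ\mathcal{E}\circ\mathcal{V}$ satisfies the hypothesis of Corollary~\ref{coro: purity}, i.e.\ it is an orthogonally mixed unitary noise.

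One subtlety worth addressing is whether ``orthogonally mixed unitary noise'' in Corollary~\ref{coro: purity} also requires the \emph{inverse} to have the same structure; if so, I would additionally note that $(\mathcal{U}\circ\mathcal{E}\circ\mathcal{V})^{-1} = \mathcal{V}^{-1}\circ\mathcal{E}^{-1}\circ\mathcal{U}^{-1}$ with $\mathcal{V}^{-1},\mathcal{U}^{-1}$ again unitary channels, so exactly the same orthogonality-preservation argument applied to the optimal orthogonal decomposition of $\mathcal{E}^{-1}$ finishes the job. I expect the only real obstacle to be bookkeeping rather than mathematics: one must be careful that the orthogonality condition used is the one that actually drives the optimality statement of Eq.~\eqref{equ: Phys_orthogonal} (Hilbert–Schmidt orthogonality of the $W_i$, equivalently orthogonality of the corresponding Choi operators), and confirm that conjugation by unitaries is an isometry for that inner product — which, as shown above, it is. No estimates or limiting arguments are needed; the proof is a two-line trace computation wrapped in the observation that the decomposition's coefficients are untouched.
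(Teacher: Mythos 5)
Your proof is correct and takes essentially the same route as the paper's: identify the transformed decomposition $\sum_i q_i\,(UW_iV)(\cdot)(UW_iV)^{\dagger}$, verify Hilbert--Schmidt orthogonality of the $UW_iV$ by cyclicity of the trace, and apply the identical argument to $(\mathcal{U}\circ\mathcal{E}\circ\mathcal{V})^{-1}=\mathcal{V}^{-1}\circ\mathcal{E}^{-1}\circ\mathcal{U}^{-1}$. Your closing remark about the inverse is not an optional subtlety but is in fact required, since the paper's definition of an orthogonally mixed unitary noise (via Corollary 1) demands the orthogonal mixed-unitary structure for both the channel and its inverse --- and you handle it exactly as the paper does.
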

\begin{proof}
    Consider the orthogonal decomposition $\mathcal{E}\left(\cdot\right) = \sum_{i}{q_i U_i\left(\cdot\right)U_i^{\dagger}}$, we can decompose
    
\vspace{-3mm}
\begin{align}
        \mathcal{U}\circ\mathcal{E}\circ \mathcal{V}\left(\cdot\right) = \sum_{i}{q_i UU_iV\left(\cdot\right)\left(UU_iV\right)^{\dagger}}
    \end{align}
    with orthogonal unitaries $\{UU_iV\}$ satisfying
    
\vspace{-3mm}
\begin{align}
        \Tr{\left[\left(UU_iV\right)^{\dagger}\left(UU_jV\right)\right]} = \Tr{\left[V^{\dagger}U_i^{\dagger}U^{\dagger}UU_jV\right]} = \Tr{\left[U_i^{\dagger}U_j\right]} = d\delta_{ij},
    \end{align}
    where $d$ is the dimension of the Hilbert space.
    Similarly we can prove that the inverse $\left(\mathcal{U}\circ\mathcal{E}\circ\mathcal{V}\right)^{-1} = \mathcal{V}^{-1}\circ \mathcal{E}^{-1}\circ \mathcal{U}^{-1}$ is also a mixed unitary map decomposed by a set of mutually orthogonal unitaries.
\vspace{-2.5mm}\end{proof}
\begin{lemma}
    For two orthogonally mixed unitary noise channels $\mathcal{E}^A$ and $\mathcal{E}^B$, $\mathcal{E}^A\otimes \mathcal{E}^B$ is also an orthogonally mixed unitary noise.
\end{lemma}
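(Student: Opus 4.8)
The plan is to mimic the structure of the proof of the previous lemma (invariance under composition with unitaries), now working site-wise on the two tensor factors. First I would fix the optimal orthogonal mixed-unitary decompositions on each factor, say $\mathcal{E}^A(\cdot) = \sum_i q_i^A U_i^A(\cdot){U_i^A}^\dagger$ and $\mathcal{E}^B(\cdot) = \sum_j q_j^B U_j^B(\cdot){U_j^B}^\dagger$, where $\{U_i^A\}$ and $\{U_j^B\}$ are each mutually orthogonal in the Hilbert--Schmidt inner product on their respective spaces. Then I would form the natural product decomposition
\begin{equation}
    \mathcal{E}^A\otimes\mathcal{E}^B(\cdot) = \sum_{i,j} q_i^A q_j^B \, \bigl(U_i^A\otimes U_j^B\bigr)(\cdot)\bigl(U_i^A\otimes U_j^B\bigr)^\dagger,
\end{equation}
which is manifestly a mixed-unitary decomposition of the product channel with real coefficients $q_i^Aq_j^B$.

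The next step is to check that the unitaries $\{U_i^A\otimes U_j^B\}_{i,j}$ are mutually orthogonal on the joint Hilbert space of dimension $d_Ad_B$. This follows from the multiplicativity of the trace over tensor products:
\begin{equation}
    \Tr\bigl[(U_i^A\otimes U_j^B)^\dagger (U_k^A\otimes U_l^B)\bigr] = \Tr\bigl[{U_i^A}^\dagger U_k^A\bigr]\,\Tr\bigl[{U_j^B}^\dagger U_l^B\bigr] = d_A\delta_{ik}\, d_B\delta_{jl},
\end{equation}
so the combined index $(i,j)$ labels a mutually orthogonal family. By the result of \cite{Jiang2021} quoted in the excerpt (Eq. \eqref{equ: Phys_orthogonal}), an orthogonal mixed-unitary decomposition is automatically optimal, so $2^{\nu(\mathcal{E}^A\otimes\mathcal{E}^B)} = \sum_{i,j}|q_i^Aq_j^B| = \bigl(\sum_i|q_i^A|\bigr)\bigl(\sum_j|q_j^B|\bigr) = 2^{\nu(\mathcal{E}^A)}2^{\nu(\mathcal{E}^B)}$; for a noise channel each factor is $1$, so $\mathcal{E}^A\otimes\mathcal{E}^B$ is again a legitimate noise channel (CPTP, implementability zero on the non-inverse side).

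Then I would repeat the argument for the inverse, using $(\mathcal{E}^A\otimes\mathcal{E}^B)^{-1} = (\mathcal{E}^A)^{-1}\otimes(\mathcal{E}^B)^{-1}$: since $\mathcal{E}^A$ and $\mathcal{E}^B$ are \emph{orthogonally mixed unitary noise} in the sense of Corollary 1, their inverses also admit orthogonal mixed-unitary decompositions, and the same tensor-product construction produces an orthogonal mixed-unitary decomposition of the product inverse. Hence $\mathcal{E}^A\otimes\mathcal{E}^B$ satisfies both conditions in Corollary 1 and is an orthogonally mixed unitary noise. I do not anticipate a genuine obstacle here; the only point requiring a little care is confirming that the family $\{U_i^A\otimes U_j^B\}$ has the right cardinality and that no coefficient collisions occur that would spoil optimality — but since optimality follows purely from orthogonality (not from the specific values of the $q$'s), even repeated coefficient values cause no problem, so the mild bookkeeping of indices is the whole of the ``difficulty.''
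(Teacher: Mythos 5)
Your proposal is correct and follows essentially the same route as the paper: tensor the two orthogonal mixed-unitary decompositions, verify orthogonality of $\{U_i^A\otimes U_j^B\}$ via multiplicativity of the trace, and apply the same construction to $({\mathcal{E}^A})^{-1}\otimes({\mathcal{E}^B})^{-1}$. The extra remarks on optimality and the multiplicativity of $2^{\nu}$ are sound but not needed beyond what the paper's own proof records.
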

\begin{proof}
    Consider the orthogonal decompositions $\mathcal{E}^X\left(\cdot\right) = \sum_{i}{q_i^X U_i^X\left(\cdot\right){U_i^X}^{\dagger}}$ ($X = A, B$), one can derive
    
\vspace{-3mm}
\begin{align}
        \mathcal{E}^A\otimes\mathcal{E}^B = \sum_{ij}{q_i^Aq_j^B U_i^A\otimes U_j^B\left(\cdot\right){U_i^A}^{\dagger}\otimes {U_j^B}^{\dagger}}.
    \end{align}
    Obviously this decomposition also satisfies the orthogonal condition
    
\vspace{-3mm}
\begin{align}
        \Tr{\left[\left({U_i^A}^{\dagger}\otimes {U_j^B}^{\dagger}\right)\left({U_k^A}\otimes {U_l^B}\right)\right]} = \Tr{\left[{U_i^A}^{\dagger}{U_k^A}\right]}\Tr{\left[{U_j^B}^{\dagger}{U_l^B}\right]} = d^Ad^B\delta_{ik}\delta_{jl},
    \end{align}
    where $d^A$ and $d^B$ are the dimensions of the Hilbert spaces $\mathcal{H}_A$ and $\mathcal{H}_B$ respectively.
    It can also be easily verified that ${\mathcal{E}^{A}}^{-1}\otimes{\mathcal{E}^{B}}^{-1}$ is a mixed unitary map decomposed by a set of mutually orthogonal unitaries.
\vspace{-2.5mm}\end{proof}

\subsection{An example of purity change: Dephasing noise}
We consider a quantum state $\rho_0^{[n]}$ undergoing an $n$-qubit dephasing noise defined in Eq. (33) (see Methods).
We measure the expectation values of $O_{\{\alpha_i\}} = \otimes_{i=1}^{n}{\sigma_{\alpha_i}^i}$, which form a set of orthogonal basis of the operator space.

\vspace{-3mm}
\begin{align}
    \begin{aligned}
        \braket{O_{\{\alpha_i\}}} &= \Tr{\left[\rho^{[n]} O_{\{\alpha_i\}}\right]} = \Tr{\left[\mathcal{E}^{[n]}\left(\rho_0^{[n]}\right)O_{\{\alpha_i\}}\right]}\\
        &= \left(1-\varepsilon\right)\Tr{\left[\rho_0^{[n]}O_{\{\alpha_i\}}\right]} + \frac{\varepsilon}{2^n}\sum_{\{\beta_i \in \{0, 3\}\}}\Tr{\left[\left(\bigotimes_{i=1}^n{\sigma_{\beta_i}^i}\right)\rho_0^{[n]}\left(\bigotimes_{i=1}^n{\sigma_{\beta_i}^i}\right)^{\dagger}\left(\bigotimes_{i=1}^n{\sigma_{\alpha_i}^i}\right)\right]}\\
        &= \left(1-\varepsilon\right)\Tr{\left[\rho_0^{[n]}O_{\{\alpha_i\}}\right]} + \frac{\varepsilon}{2^n}\sum_{\{\beta_i \in \{0, 3\}\}}\Tr{\left[\rho_0^{[n]}\left(\bigotimes_{i=1}^n{\sigma_{\beta_i}^i}\right)^{\dagger}\left(\bigotimes_{i=1}^n{\sigma_{\alpha_i}^i}\right)\left(\bigotimes_{i=1}^n{\sigma_{\beta_i}^i}\right)\right]}\\
        &= \left(1-\varepsilon\right)\Tr{\left[\rho_0^{[n]}O_{\{\alpha_i\}}\right]} + \frac{\varepsilon}{2^n}\sum_{\{\beta_i \in \{0, 3\}\}}\Tr{\left[\rho_0^{[n]}\bigotimes_{i=1}^n{\left(\sigma_{\beta_i}^i\sigma_{\alpha_i}^i\sigma_{\beta_i}^i\right)}\right]}\\
        &= \left(1-\varepsilon\right)\Tr{\left[\rho_0^{[n]}O_{\{\alpha_i\}}\right]} + \frac{\varepsilon}{2^n}\Tr{\left[\rho_0^{[n]}\bigotimes_{i=1}^n{\left(\sum_{\beta_i \in \{0, 3\}}\sigma_{\beta_i}^i\sigma_{\alpha_i}^i\sigma_{\beta_i}^i\right)}\right]}.
    \end{aligned}
\end{align}
We note that

\vspace{-3mm}
\begin{align}
    \sum_{\beta_i \in \{0, 3\}}\sigma_{\beta_i}^i\sigma_{\alpha_i}^i\sigma_{\beta_i}^i = \sigma_{\alpha_i}^i + \sigma_z^i\sigma_{\alpha_i}^i\sigma_z^i = \left\{
        \begin{aligned}
            &2\sigma_{\alpha_i}^i, &\text{if }\alpha_i \in \{0, 3\},\\
            &0, &\text{if }\alpha_i \in \{1, 2\},
        \end{aligned}
        \right.
\end{align}
from which we can derive

\vspace{-3mm}
\begin{align}
    \braket{O_{\{\alpha_i\}}} = \left\{
        \begin{aligned}
            &\braket{O_{\{\alpha_i\}}}_{(0)}, &\text{if }\alpha_i\in \{0, 3\} \text{ for }\forall i,\\
            &\left(1 - \varepsilon\right)\braket{O_{\{\alpha_i\}}}_{(0)}, &\text{otherwise}.
        \end{aligned}
        \right.
\end{align}
Therefore, the length of $\vec{r}$ defined above Eq. (5) in the main text satisfies

\vspace{-3mm}
\begin{align}
    \left(1-\varepsilon\right)\leq \frac{\left|\vec{r}\left(\rho\right)\right|}{\left|\vec{r}\left(\rho_0\right)\right|} \leq 1.
\end{align}
Compared with Eq. (36) (see Methods), we conclude that the bounds in Corollary 1 can be reached for $n=1$, while for a big $n$, the physical implementability still serves as a good estimation.

\subsection{Numerical results for mixed states}
In the main text, we have calculated the decrease of logarithmic negativity $\left|\Delta E_N\right|$ for pure states $\rho_0=\ket{\psi}\hspace{-1mm}\bra{\psi}$ in Fig. 1 (a) and mixed states without positivity constraints $\rho_0 = \lambda_1\ket{\psi_1}\hspace{-1mm}\bra{\psi_1} + \lambda_2\ket{\psi_2}\hspace{-1mm}\bra{\psi_2}$ in Fig. 1 (b), where $\lambda_1$ and $\lambda_2$ are randomly chosen from $[-1, 1]$ and normalized as $\lambda_1+\lambda_2=1$.
Here we additionally provide numerical results for physical mixed states, i.e., $\lambda_1$ and $\lambda_2$ are randomly chosen from $[0, 1]$ and normalized as $\lambda_1+\lambda_2=1$, as shown in Fig. \ref{fig: Negativity}, which shows a similar distribution to that of pure states but a little smaller overall.
It means that the enlargement of the distribution range shown in Fig. 1 (b) of the main text entirely stems from the input states with negative eigenvalues.
\begin{figure}
    \includegraphics[width=0.49\linewidth]{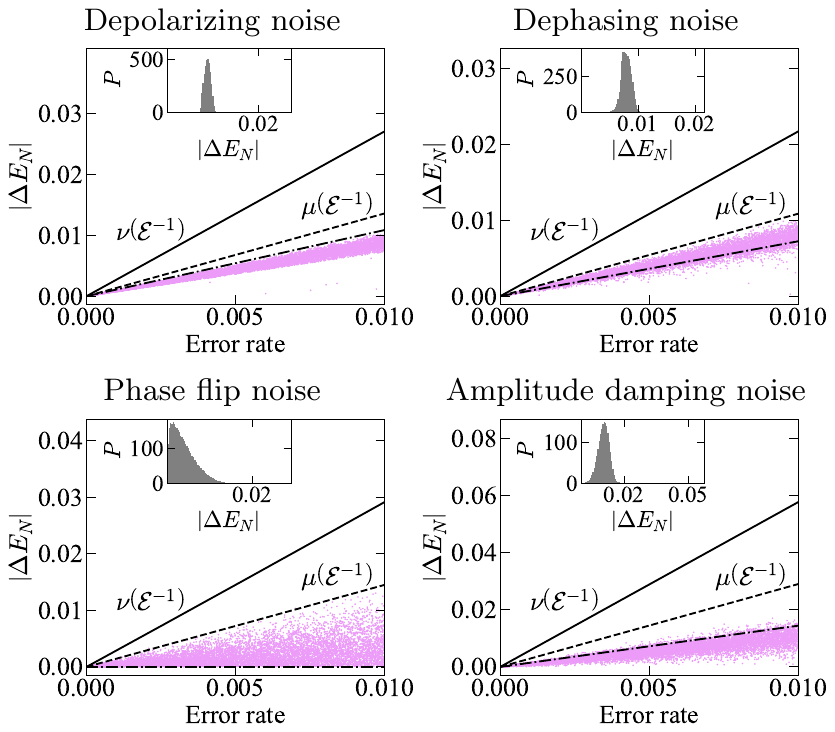}
    \caption{The change of logarithmic negativity of quantum states $\left|\Delta E_N\right|$.
    The input states are randomly chosen as mixture of two-qubit pure states $\rho_0 = \lambda_1\ket{\psi_1}\hspace{-1mm}\bra{\psi_1} + \lambda_2\ket{\psi_2}\hspace{-1mm}\bra{\psi_2}$.
    $\lambda_1$ and $\lambda_2$ are randomly chosen from $[0, 1]$ and normalized as $\lambda_1+\lambda_2=1$.
    To benchmark, we plot the upper bounds given by the physical implementability of the noise inverse $\nu{\left(\mathcal{E}^{-1}\right)}$ (solid lines), the estimation values given by root-mean-square $\mu{\left(\mathcal{E}^{-1}\right)}$ (dashed lines), and the analytical values derived for the maximally entangled state (dash-dot lines).
    Insets show the probability distributions $P$ of $\left| \Delta E_N\right|$ for error rate $\varepsilon=0.01$.}
    \label{fig: Negativity}
\end{figure}

\subsection{Decoherence effects on the maximally entangled state}
We consider the maximally entangled states undergoing noise channels.
A maximally entangled state between a bipartite system $\mathcal{H}_A\otimes \mathcal{H}_B$ is defined as

\vspace{-3mm}
\begin{align}
    \ket{\Phi}_{\max} = \frac{1}{\sqrt{d_{\rm s}}}\sum_{i}{\ket{i_Ai_B}},\label{equ:maximally_entangled}
\end{align}
where $d_A = d_B \equiv d_{\rm s}$ is the dimension of the subsystem Hilbert space.
The partial transpose of its density matrix can be written as

\vspace{-3mm}
\begin{align}
    \rho_{\max}^{{\rm T}_B} = \frac{1}{d_{\rm s}}\sum_{i,j}{\ket{i_Aj_B}\hspace{-1mm}\bra{j_Ai_B}}
\end{align}
with the corresponding eigenvalues and eigenvectors being

\vspace{-3mm}
\begin{align}
    \begin{aligned}
        &\ket{i_Ai_B}, &\lambda = \frac{1}{d_{\rm s}},\\
        &\frac{\ket{i_Aj_B}+\ket{j_Ai_B}}{2} (i\neq j), &\lambda = \frac{1}{d_{\rm s}},\\
        &\frac{\ket{i_Aj_B}-\ket{j_Ai_B}}{2} (i\neq j), &\lambda = -\frac{1}{d_{\rm s}}.
    \end{aligned}
\end{align}
Consequently, one can easily calculate the logarithmic negativity of the maximally entangled state

\vspace{-3mm}
\begin{align}
    E_N\left(\rho_{\max}\right) = \log_2{\sum_k{\left|\lambda_k\right|}} = \log_2{\left[d_{\rm s}\times\frac{1}{d_{\rm s}}+d_{\rm s}(d_{\rm s}-1)\times \frac{1}{d_{\rm s}}\right]} = \log_2{d_{\rm s}}.
\end{align}

We focus on spin-$\frac{1}{2}$ models hereafter, where the system is divided in half and we consider the bipartite entanglement.
We require that the total number of qubits $n$ are even.
The maximally entangled state on such a bipartite spin chain can be written as

\vspace{-3mm}
\begin{align}
    \ket{\Phi}_{\max}^{[n]} = \frac{1}{2^{n/4}}\left(\prod_{k=1}^{n/2}{\sum_{i_k = 0}^{1}}\right){\ket{i_1}\otimes\dots\otimes \ket{i_{n/2}}\otimes\ket{i_{n/2}}\otimes\dots\otimes\ket{i_1}}.
\end{align}
We can directly calculate the logarithmic negativity of this state $E_N\left(\rho_{\max}^{[n]}\right) = \frac{n}{2}$ if we notice that the two subsystems share $n/2$ pairs of Bell states between them.
In the following, we derive the analytical form of the negativity decrease of the maximally entangled state for four typical types of noise models, i.e., we choose $\rho_0^{[n]} = \rho_{\max}^{[n]}$.

\subsubsection{Phase flip noise}
The $n$-qubit phase flip noise is defined as

\vspace{-3mm}
\begin{align}
    \mathcal{E}^{[n]}\left(\rho^{[n]}_0\right) = \left(1-\varepsilon\right) \rho^{[n]}_0 + \varepsilon \left(\bigotimes_{k=1}^n{\sigma_{z}^k}\right)\rho^{[n]}_0\left(\bigotimes_{k=1}^n{\sigma_{z}^k}\right)^{\dagger},
\end{align}
It can be easily verified that

\vspace{-3mm}
\begin{align}
    \left(\bigotimes_{k=1}^n{\sigma_{z}^k}\right)\ket{\Phi}_{\max}^{[n]} = \ket{\Phi}_{\max}^{[n]},
\end{align}
meaning that the phase flip noise leaves the maximally entangled state unchanged, i.e., 

\vspace{-3mm}
\begin{align}
    \mathcal{E}\left(\rho_{\max}^{[n]}\right) = \rho_{\max}^{[n]}.
\end{align}
Therefore, the change of logarithmic negativity $\Delta E_N\equiv E_N\left(\rho\right) - E_N\left(\rho_0\right)$ is just equal to zero.

\subsubsection{Depolarizing noise}
The $n$-qubit depolarizing noise is defined as

\vspace{-3mm}
\begin{align}
    \mathcal{E}^{[n]}\left(\rho^{[n]}_0\right) &= \left(1-\varepsilon\right)\rho^{[n]}_0 + \frac{\varepsilon}{2^n}I^{[n]}
\end{align}
Here we adopt the more general form in Eq. \eqref{equ:maximally_entangled} and derive the output state

\vspace{-3mm}
\begin{align}
    \rho = \frac{1-\varepsilon}{d_{\rm s}}\sum_{ij}\ket{i_Ai_B}\hspace{-1mm}\bra{j_Aj_B} + \frac{\varepsilon}{d_{\rm s}^2}\sum_{ij}\ket{i_Aj_B}\hspace{-1mm}\bra{i_Aj_B}
\end{align}
with the corresponding partial transpose being

\vspace{-3mm}
\begin{align}
    \rho^{{\rm T}_B} = \frac{1-\varepsilon}{d_{\rm s}}\sum_{ij}\ket{i_Aj_B}\hspace{-1mm}\bra{j_Ai_B} + \frac{\varepsilon}{d_{\rm s}^2}\sum_{ij}\ket{i_Aj_B}\hspace{-1mm}\bra{i_Aj_B}
\end{align}
This operator is block diagonal, where the block subspaces are expanded by $\ket{i_Aj_B}$ and $\ket{i_Bj_A}$.
The eigenvalues and eigenvectors of this block diagonal matrix can be easily calculated

\vspace{-3mm}
\begin{align}
    \begin{aligned}
        &\ket{i_Ai_B}, &\lambda = \frac{1-\varepsilon}{d_{\rm s}}+\frac{\varepsilon}{d_{\rm s}^2},\\
        &\frac{\ket{i_Aj_B}+\ket{j_Ai_B}}{2} (i\neq j), &\lambda = \frac{1-\varepsilon}{d_{\rm s}}+\frac{\varepsilon}{d_{\rm s}^2},\\
        &\frac{\ket{i_Aj_B}-\ket{j_Ai_B}}{2} (i\neq j), &\lambda = -\frac{1-\varepsilon}{d_{\rm s}}+\frac{\varepsilon}{d_{\rm s}^2}.
    \end{aligned}
\end{align}
The resulting logarithmic negativity of the output state is (where we have assumed that $\frac{1-\varepsilon}{d}\geq\frac{\varepsilon}{d^2}$)

\vspace{-3mm}
\begin{align}
    E_N\left(\rho\right) = \log_2{\left[d_{\rm s}\times\left(\frac{1-\varepsilon}{d_{\rm s}}+\frac{\varepsilon}{d_{\rm s}^2}\right)+\frac{d_{\rm s}(d_{\rm s}-1)}{2}\left(\frac{1-\varepsilon}{d_{\rm s}}+\frac{\varepsilon}{d_{\rm s}^2}\right)+\frac{d_{\rm s}(d_{\rm s}-1)}{2}\left(\frac{1-\varepsilon}{d_{\rm s}}-\frac{\varepsilon}{d_{\rm s}^2}\right)\right]} = \log_2{\left[d_{\rm s}\left(1-\varepsilon\right)+\frac{\varepsilon}{d_{\rm s}}\right]},
\end{align}
from which we obtain the decrease of logarithmic negativity as 

\vspace{-3mm}
\begin{align}
    \Delta E_N = E_N\left(\rho\right)-E_N\left(\rho_0\right) = \log_2{\left[1-\left(1-\frac{1}{d_{\rm s}^2}\right)\varepsilon\right]}.
\end{align}
Here $d_{\rm s}=2^{n/2}$ in this case.

\subsubsection{Dephasing noise}
The n-qubit dephasing noise is defined as

\vspace{-3mm}
\begin{align}
    \mathcal{E}^{[n]}\left(\rho^{[n]}_0\right) = \left(1-\varepsilon\right)\rho^{[n]}_0 + \frac{\varepsilon}{2^n}\left(\prod_{k=1}^n\sum_{\alpha_k \in \{0, 3\}}\right)\left[\left(\bigotimes_{k=1}^n{\sigma_{\alpha_k}^k}\right)\rho^{[n]}_0\left(\bigotimes_{k=1}^n{\sigma_{\alpha_k}^k}\right)^{\dagger}\right].
\end{align}
We calculate a general term

\vspace{-3mm}
\begin{align}
    \begin{aligned}
        &\left(\prod_{k=1}^n\sum_{\alpha_k \in \{0, 3\}}\right)\left[\left(\bigotimes_{k=1}^n{\sigma_{\alpha_k}^k}\right)\left(\bigotimes_{k=1}^n{\ket{i_k}}\right)\left(\bigotimes_{k=1}^n{\bra{j_k}}\right)\left(\bigotimes_{k=1}^n{\sigma_{\alpha_k}^k}\right)\right]\\
        = &\left(\prod_{k=1}^n\sum_{\alpha_k \in \{0, 3\}}\right)\left[\bigotimes_{k=1}^n\left(\sigma_{\alpha_k}\ket{i_k}\hspace{-1mm}\bra{j_k}\sigma_{\alpha_k}\right)\right]\\
        = &\bigotimes_{k=1}^n\left[\sum_{\alpha_k\in \{0, 3\}}\left(\sigma_{\alpha_k}\ket{i_k}\hspace{-1mm}\bra{j_k}\sigma_{\alpha_k}\right)\right]\\
        = &2^n\prod_{k=1}^n{\delta_{i_k, j_k}}\bigotimes_{k=1}^n\ket{i_{k}}\hspace{-1mm}\bra{j_{k}}
    \end{aligned}
\end{align}
for arbitrary $\{i_k\}$ and $\{j_k\}$. In the last equality, we have used the fact that

\vspace{-3mm}
\begin{align}
    \sum_{\alpha_k\in \{0, 3\}}\left(\sigma_{\alpha_k}\ket{i_k}\hspace{-1mm}\bra{j_k}\sigma_{\alpha_k}\right) = 2\delta_{i_k, j_k}\ket{i_k}\hspace{-1mm}\bra{j_k}.
\end{align}
As a result, we can calculate the output state after the dephasing noise

\vspace{-3mm}
\begin{align}
    \rho^{[n]} = \mathcal{E}^{[n]}\left(\rho_{\max}^{[n]}\right) = \left(1-\varepsilon\right)\rho_{\max}^{[n]} + \frac{\varepsilon}{2^{n/2}}\left(\prod_{k=1}^{n/2}{\sum_{i_k = 0}^{1}}\right)\ket{i_1}\hspace{-1mm}\bra{i_1}\otimes\dots\otimes\ket{i_{n/2}}\hspace{-1mm}\bra{i_{n/2}}\otimes\ket{i_{n/2}}\hspace{-1mm}\bra{i_{n/2}}\otimes\dots\otimes\ket{i_{1}}\hspace{-1mm}\bra{i_{1}}
\end{align}
We can use the form in Eq. \eqref{equ:maximally_entangled} again to simplify our calculation

\vspace{-3mm}
\begin{align}
    \begin{aligned}
        \rho = \frac{1-\varepsilon}{d_{\rm s}}\sum_{ij}\ket{i_Ai_B}\hspace{-1mm}\bra{j_Aj_B} + \frac{\varepsilon}{d_{\rm s}}\sum_{i}\ket{i_Ai_B}\hspace{-1mm}\bra{i_Ai_B}\\
        \rho^{{\rm T}_B} = \frac{1-\varepsilon}{d_{\rm s}}\sum_{ij}\ket{i_Aj_B}\hspace{-1mm}\bra{j_Ai_B} + \frac{\varepsilon}{d_{\rm s}}\sum_{i}\ket{i_Ai_B}\hspace{-1mm}\bra{i_Ai_B}
    \end{aligned}
\end{align}
$\rho^{{\rm T}_B}$ is also a block diagonal matrix in this case, whose eigenvalues and eigenvectors are

\vspace{-3mm}
\begin{align}
    \begin{aligned}
        &\ket{i_Ai_B}, &\lambda = \frac{1-\varepsilon}{d_{\rm s}}+\frac{\varepsilon}{d_{\rm s}} = \frac{1}{d_{\rm s}},\\
        &\frac{\ket{i_Aj_B}+\ket{j_Ai_B}}{2} (i\neq j), &\lambda = \frac{1-\varepsilon}{d_{\rm s}},\\
        &\frac{\ket{i_Aj_B}-\ket{j_Ai_B}}{2} (i\neq j), &\lambda = -\frac{1-\varepsilon}{d_{\rm s}}.
    \end{aligned}
\end{align}
Therefore, the logarithmic negativity of the output state is given by

\vspace{-3mm}
\begin{align}
    E_N\left(\rho\right) &= \log_2{\left[d_{\rm s}\times\left(\frac{1}{d_{\rm s}}\right)+d_{\rm s}\left(d_{\rm s}-1\right)\frac{1-\varepsilon}{d_{\rm s}}\right]} = \log_2{\left[d_{\rm s}-\left(d_{\rm s}-1\right)\varepsilon\right]}.\\
    \Longrightarrow \Delta E_N &= \log_2{\left[1-\left(1-\frac{1}{d_{\rm s}}\right)\varepsilon\right]},
\end{align}
where $d_{\rm s} = 2^{n/2}$.

\subsubsection{Amplitude damping noise}
The single-qubit amplitude damping noise is defined via the Kraus operator $E_0 = \ket{0}\hspace{-1mm}\bra{0}+\sqrt{1-\varepsilon}\ket{1}\hspace{-1mm}\bra{1}$ and $E_1 = \sqrt{\varepsilon}\ket{0}\hspace{-1mm}\bra{1}$ with the operator-sum representation

\vspace{-3mm}
\begin{align}
    \mathcal{E}\left(\rho_0\right) = \sum_{i}E_i\rho_0E_i^{\dagger}.
\end{align}
The multi-qubit amplitude damping noise is just defined as the tensor product of the above noise channel, i.e.,

\vspace{-3mm}
\begin{align}
    \mathcal{E}^{[n]} = \bigotimes_{k=1}^n\mathcal{E}^k,
\end{align}
motivating us to consider the system as $n/2$ pairs of Bell states undergoing two-qubit noise channels independently.

\vspace{-3mm}
\begin{align}
    \begin{aligned}
        \rho^{[2]} = \mathcal{E}^{[2]}\left(\rho_{\max}^{[2]}\right) &= \frac{1}{2}[\left(1+\varepsilon^2\right)\ket{00}\hspace{-1mm}\bra{00} + \left(1-\varepsilon\right)\ket{00}\hspace{-1mm}\bra{11}+\left(1-\varepsilon\right)\ket{11}\hspace{-1mm}\bra{00}\\
        &+ \left(1-\varepsilon\right)\varepsilon\ket{01}\hspace{-1mm}\bra{01}+ \left(1-\varepsilon\right)\varepsilon\ket{10}\hspace{-1mm}\bra{10} + \left(1-\varepsilon\right)^2\ket{11}\hspace{-1mm}\bra{11}],
    \end{aligned}
\end{align}
whose partial transpose reads

\vspace{-3mm}
\begin{align}
    {\rho^{[2]}}^{{\rm T}_B} = \mathcal{E}^{[2]}\left(\rho_{\max}^{[2]}\right) &= \frac{1}{2}[\left(1+\varepsilon^2\right)\ket{00}\hspace{-1mm}\bra{00} + \left(1-\varepsilon\right)\ket{01}\hspace{-1mm}\bra{10}+\left(1-\varepsilon\right)\ket{10}\hspace{-1mm}\bra{01}\\
    &+ \left(1-\varepsilon\right)\varepsilon\ket{01}\hspace{-1mm}\bra{01}+ \left(1-\varepsilon\right)\varepsilon\ket{10}\hspace{-1mm}\bra{10} + \left(1-\varepsilon\right)^2\ket{11}\hspace{-1mm}\bra{11}].
\end{align}
It is a block diagonal matrix, and the eigenvalues and eigenvectors are

\vspace{-3mm}
\begin{align}
    \begin{aligned}
        &\ket{00}, &\lambda = \frac{1+\varepsilon^2}{2},\\
        &\ket{11}, &\lambda = \frac{\left(1-\varepsilon\right)^2}{2},\\
        &\frac{\ket{01}+\ket{10}}{2}, &\lambda = \frac{1-\varepsilon^2}{2},\\
        &\frac{\ket{01}-\ket{10}}{2}, &\lambda = -\frac{\left(1-\varepsilon\right)^2}{2}.
    \end{aligned}
\end{align}
The two subsystems share $n/2$ pairs in total, hence the logarithmic negativity of the output state is

\vspace{-3mm}
\begin{align}
    E_N\left(\rho^{[n]}\right) &= \frac{n}{2}E_N\left(\rho^{[2]}\right) = \frac{n}{2}\log_2{\left[1+\left(1-\varepsilon\right)^2\right]}.\\
    \Longrightarrow\Delta E_N &= \frac{n}{2}\log_2{\left[1-\varepsilon+\frac{\varepsilon^2}{2}\right]}.
\end{align}

\subsection{Bounds for physical implementability}
It was proved in \cite{Jiang2021} that the physical implementability for a general HPTP map is bounded by the trace norm of its Choi operator

\vspace{-3mm}
\begin{align}
    \frac{\llvert\Lambda_{\mathcal{N}}\rrvert_1}{d} \leq 2^{\nu\left(\mathcal{N}\right)} \leq \llvert\Lambda_{\mathcal{N}}\rrvert_1,\label{equ:Physical_bound}
\end{align}
where $d$ is the dimension of the Hilbert space.
In the following, we provide two lower bounds for the trace norm in terms of maximum and minimum eigenvalues.
\begin{lemma}
    For a Choi operator $\Lambda$, its trace norm satisfies
    
\vspace{-3mm}
\begin{align}
         \llvert \Lambda \rrvert_1 &\geq 2\lambda_{\max} - d,\label{equ:Norm_max}\\
         \llvert \Lambda \rrvert_1 &\geq d - 2\lambda_{\min},\label{equ:Norm_min}
    \end{align}
    where $\lambda_{\min}$ and $\lambda_{\max}$ are the minimum and maximum eigenvalues of $\Lambda$ respectively, and $d$ is the dimension of the Hilbert space.
\end{lemma}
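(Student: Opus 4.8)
The plan is to use only two structural facts about a Choi operator $\Lambda$ on the joint system $\sigma\tau$: it is Hermitian (the HP condition), so its $d^2$ eigenvalues $\lambda_1,\dots,\lambda_{d^2}$ are real, and it obeys the trace-preserving normalization $\Tr_\tau[\Lambda]=I_\sigma$, which upon taking the full trace gives $\Tr[\Lambda]=d$. In this notation the two statements to be shown become $\sum_k|\lambda_k|\ge 2\lambda_{\max}-d$ and $\sum_k|\lambda_k|\ge d-2\lambda_{\min}$, subject to the single constraint $\sum_k\lambda_k=d$.

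First I would split the spectrum into its nonnegative part $S_+:=\sum_{\lambda_k\ge 0}\lambda_k\ge 0$ and its negative part $S_-:=\sum_{\lambda_k<0}\lambda_k\le 0$, so that $\llvert\Lambda\rrvert_1=S_+-S_-$ while $S_++S_-=d$. Eliminating $S_-$ via $S_-=d-S_+$ gives the identity $\llvert\Lambda\rrvert_1=2S_+-d$, and eliminating $S_+$ instead gives $\llvert\Lambda\rrvert_1=d-2S_-$. These two identities are the crux; the bounds then follow by comparing $S_+$ and $S_-$ with the extreme eigenvalues.

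Next I would note that $S_+\ge\lambda_{\max}$: since $\Tr[\Lambda]=d>0$ the largest eigenvalue is strictly positive, hence it appears among the summands of $S_+$ while every other summand is nonnegative; substituting into $\llvert\Lambda\rrvert_1=2S_+-d$ yields the first inequality. Symmetrically, when $\lambda_{\min}<0$ one has $S_-\le\lambda_{\min}$ because $S_-$ is a sum of nonpositive terms one of which equals $\lambda_{\min}$; substituting into $\llvert\Lambda\rrvert_1=d-2S_-$ gives the second. The only remaining case, $\lambda_{\min}\ge 0$ (equivalently $\Lambda\ge 0$, i.e. $\mathcal{N}$ completely positive), is immediate: there $\llvert\Lambda\rrvert_1=\Tr[\Lambda]=d\ge d-2\lambda_{\min}$.

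There is essentially no obstacle here beyond careful sign bookkeeping and dispatching the one-signed-spectrum edge cases; the argument is a short manipulation of the trace constraint. Once the lemma is in hand, combining it with the sandwich bound $\llvert\Lambda_{\mathcal{N}}\rrvert_1/d\le 2^{\nu(\mathcal{N})}\le\llvert\Lambda_{\mathcal{N}}\rrvert_1$ in Eq.~\eqref{equ:Physical_bound} immediately produces lower estimates for $\nu(\mathcal{N})$ in terms of the extreme eigenvalues of the Choi matrix, which is the practical point of the statement.
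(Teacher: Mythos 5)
Your proof is correct and takes essentially the same route as the paper's: both arguments use only the Hermiticity of $\Lambda$ and the trace normalization $\Tr\left[\Lambda\right]=d$, the paper by isolating the extreme eigenvalue and applying the triangle inequality to the sum of the remaining ones, and you by splitting the spectrum into its nonnegative and negative parts $S_{\pm}$. Your variant has the minor advantage that the exact identities $\llvert\Lambda\rrvert_1 = 2S_+ - d = d - 2S_-$ make the equality conditions stated after the lemma ($\lambda_{d^2-1}\geq 0\geq\lambda_{d^2-2}$, respectively $\lambda_0\leq 0\leq\lambda_1$) immediate, and you correctly dispatch the one-signed-spectrum edge case.
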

\begin{proof}
    Suppose the eigenvalues of $\Lambda$ are $\lambda_{\min} = \lambda_0 \leq \lambda_1 \leq \dots\leq \lambda_{d^2-1}=\lambda_{\max}$ with $\Tr{\left[\Lambda\right]} = \sum_{i = 0}^{d^2-1}{\lambda_i} = d$.
    
\vspace{-3mm}
\begin{align}
        \llvert\Lambda\rrvert_1 &= \sum_{i=0}^{d^2-1}{\left|\lambda_i\right|} \geq \left|\lambda_{\max}\right| + \left|\sum_{i=0}^{d^2-2}{\lambda_i}\right| = \left|\lambda_{\max}\right| + \left|d - \lambda_{\max}\right| \geq 2\lambda_{\max} - d.\\
        \llvert\Lambda\rrvert_1 &= \sum_{i=0}^{d^2-1}{\left|\lambda_i\right|} \geq \left|\lambda_{\min}\right| + \left|\sum_{i=1}^{d^2-1}{\lambda_i}\right| = \left|\lambda_{\min}\right| + \left|d - \lambda_{\min}\right| \geq d - 2\lambda_{\min}.
    \end{align}
\vspace{-2.5mm}\end{proof}
\noindent The equality in Eq. \eqref{equ:Norm_max} holds when $\lambda_{d^2-1} \geq 0 \geq \lambda_{d^2-2}$, while the equality in Eq. \eqref{equ:Norm_min} holds when $\lambda_0 \leq 0 \leq \lambda_1$.
With Eq. \eqref{equ:Physical_bound}, we can provide two lower bounds for the physical implementability.
\begin{corollary}
    For an HPTP map $\mathcal{N}$, its physical implementability is bounded by
    
\vspace{-3mm}
\begin{align}
        2^{\nu{\left(\mathcal{N}\right)}} &\geq \frac{2\lambda_{\max}}{d} - 1,\label{equ:Physical_max}\\
        2^{\nu{\left(\mathcal{N}\right)}} &\geq 1 - \frac{2\lambda_{\min}}{d}.
    \end{align}
    where $\lambda_{\min}$ and $\lambda_{\max}$ are the minimum and maximum eigenvalues of $\Lambda_{\mathcal{N}}$ respectively, and $d$ is the dimension of the Hilbert space.
\end{corollary}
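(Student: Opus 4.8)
The plan is to obtain both inequalities by concatenating two facts already available in the excerpt. The first is the two-sided estimate of Ref.~\cite{Jiang2021}, restated as Eq.~\eqref{equ:Physical_bound}, whose left half gives $\llvert\Lambda_{\mathcal{N}}\rrvert_1/d \leq 2^{\nu(\mathcal{N})}$ for every HPTP map $\mathcal{N}$. The second is the pair of trace-norm lower bounds proved in the Lemma of this subsection, namely $\llvert\Lambda\rrvert_1 \geq 2\lambda_{\max}-d$ (Eq.~\eqref{equ:Norm_max}) and $\llvert\Lambda\rrvert_1 \geq d-2\lambda_{\min}$ (Eq.~\eqref{equ:Norm_min}). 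Applying the Lemma to $\Lambda=\Lambda_{\mathcal{N}}$ and feeding the result into Eq.~\eqref{equ:Physical_bound} yields the claimed bounds after dividing through by $d$.

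Concretely I would proceed in three short steps. First, observe that because $\mathcal{N}$ is Hermitian-preserving its Choi operator $\Lambda_{\mathcal{N}}$ is Hermitian, so its spectrum is real and $\lambda_{\min},\lambda_{\max}$ are well defined; hence the Lemma applies verbatim with $\Lambda=\Lambda_{\mathcal{N}}$. Second, invoke the lower half of Eq.~\eqref{equ:Physical_bound}, i.e. $2^{\nu(\mathcal{N})}\geq \llvert\Lambda_{\mathcal{N}}\rrvert_1/d$. Third, substitute Eq.~\eqref{equ:Norm_max} and divide by $d$ to get $2^{\nu(\mathcal{N})}\geq 2\lambda_{\max}/d-1$, and likewise substitute Eq.~\eqref{equ:Norm_min} to get $2^{\nu(\mathcal{N})}\geq 1-2\lambda_{\min}/d$.

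The only thing that needs attention is the orientation of the inequalities: one must use the \emph{lower} bound $\llvert\Lambda_{\mathcal{N}}\rrvert_1/d\leq 2^{\nu(\mathcal{N})}$ in Eq.~\eqref{equ:Physical_bound} — the half that turns a lower bound on the trace norm into a lower bound on $2^{\nu(\mathcal{N})}$ — rather than the upper half. There is no substantive obstacle; the statement is an immediate corollary of the Lemma together with Eq.~\eqref{equ:Physical_bound}. As an optional remark, one could record when the bounds are saturated by intersecting the equality conditions noted after the Lemma ($\lambda_{d^2-1}\geq 0\geq\lambda_{d^2-2}$ for the first, $\lambda_0\leq 0\leq\lambda_1$ for the second) with tightness of the left inequality of Eq.~\eqref{equ:Physical_bound}, but this is not needed for the statement as given.
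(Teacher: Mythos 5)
Your proof is correct and is exactly the route the paper takes: it combines the left half of Eq.~\eqref{equ:Physical_bound}, $2^{\nu(\mathcal{N})}\geq \llvert\Lambda_{\mathcal{N}}\rrvert_1/d$, with the two trace-norm lower bounds of the preceding Lemma and divides by $d$. The paper leaves this combination implicit ("With Eq.~\eqref{equ:Physical_bound}, we can provide two lower bounds\ldots"), and your explicit note about using the correct half of the two-sided estimate is the only point of substance.
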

\noindent It can be verified that the bound in Eq. \eqref{equ:Physical_max} can be reached for the inverse of the multiqubit Pauli, dephasing, and depolarizing noise.

As for the possible upper bound, we note that there exists a trivial decomposition for the Choi operator of any HPTP map $\mathcal{N}$ which is not CP,

\vspace{-3mm}
\begin{align}
    \begin{aligned}
        \Lambda_{\mathcal{N}} &= \Lambda_{\mathcal{N}} - \lambda_{\min}I_{\sigma\otimes\tau} + \lambda_{\min}I_{\sigma\otimes\tau}\\
        &= \Lambda_{\mathcal{N}} - \lambda_{\min}d\frac{I_{\sigma\otimes\tau}}{d} - \left(-\lambda_{\min}\right)d\frac{I_{\sigma\otimes\tau}}{d},
    \end{aligned}
\end{align}
where $I_{\sigma\otimes\tau}$ is the identity operator defined on the extended $d^2$-dimensional Hilbert space with $\Tr_{\sigma}{\left[I_{\sigma\otimes\tau}\right]} = dI_{\tau}$, 
Therefore, $I_{\sigma\otimes \tau}/d$ is a proper Choi operator for a CPTP quantum channel, denoted as $\mathcal{T}_I$.
Actually it can be verified that $\mathcal{T}_I\left(\rho\right) = I$ for any density operator $\rho$, i.e., it projects all states to the maximally mixed state.
Consequently, we can decompose the map $\mathcal{N}$ with two CPTP maps as

\vspace{-3mm}
\begin{align}
    \mathcal{N} = \left(1-\lambda_{\min}d\right)\frac{\mathcal{N} - \lambda_{\min}d\mathcal{T}_I}{1 - \lambda_{\min}d} + \lambda_{\min}d\mathcal{T}_I,
\end{align}
from which we obtain the upper bound of the physical implementability

\vspace{-3mm}
\begin{align}    
    2^{\nu{\left(\mathcal{N}\right)}}\leq1 - 2\lambda_{\min}d.\label{equ:Physical_min}
\end{align}

For example, for the $n$-qubit depolarizing noise $\mathcal{E}^{[n]}$, the maximum and minimum eigenvalues of $\Lambda_{{\mathcal{E}^{[n]}}^{-1}}$ are

\vspace{-3mm}
\begin{align}
    \lambda_{\max} &= \frac{4^n - \varepsilon}{\left(1-\varepsilon\right)2^n},\\
    \lambda_{\min} &= -\frac{\varepsilon}{\left(1-\varepsilon\right)2^n},
\end{align}
which can be directly derived from Eq. (30) (see Methods).
Then Eq. \eqref{equ:Physical_max} and \eqref{equ:Physical_min} give

\vspace{-3mm}
\begin{align}
    \log_2\left[\frac{1+ \left(1 - \frac{2}{4^n}\right)\varepsilon}{1-\varepsilon}\right] \leq \nu{\left({\mathcal{E}^{[n]}}^{-1}\right)} \leq \log_2{\left[\frac{1+\varepsilon}{1-\varepsilon}\right]}.
\end{align}
It is implied that if we use these two bounds to estimate the physical implementability, the precision will grow exponentially with the number of qubits $n$ in this case and the estimation is exact in the thermodynamic limit.

\end{document}